%
%
%
%
%
%
%
%
%
%
%
%
%
%
\documentclass[num-refs]{wiley-article}




\usepackage{siunitx}
\usepackage{hyperref}
\hypersetup{colorlinks,linkcolor={red},citecolor={blue},urlcolor={red}} 
\papertype{Original Article}
\paperfield{Journal Section}

\title{Validation of Neural Network Controllers for Uncertain Systems Using the Keep-Close Approach: Robustness Analysis and Safety Verification}

\abbrevs{ABC, a black cat; DEF, doesn't ever fret; GHI, goes home immediately.}

\author[1\authfn{1}]{Abdelhafid Zenati}
\author[1\authfn{0}]{Nabil Aouf}

\contrib[\authfn{1}]{Equally contributing authors.}

\affil[1]{A. Zenati and N. Aouf are with the School of Mathematics, Computer Science and Engineering  Department of Electrical and Electronic Engineering, City, University of London.
}

\corraddress{ Tait Building, Northampton Square, London, EC1V 0HB. }
\corremail{abdelhafid.zenati@city.ac.uk}

\fundinginfo{Funder One, Funder One Department, Grant/Award Number: 123456, 123457 and 123458; Funder Two, Funder Two Department, Grant/Award Number: 123459}

\runningauthor{Zenati et al.}

\begin{document}

\begin{frontmatter}
\maketitle

\begin{abstract}
Ensuring the safety and robustness of neural network (NN) controllers in uncertain environments—characterised by unmodelled dynamics, nonlinearities, and time delays—remains a fundamental challenge in robust control systems. This paper introduces a novel method, termed "Keep-Close," for analysing the performance and robustness of uncertain feedback systems with NN controllers. The approach formulates the problem as tracking the dynamical error between an NN-controlled uncertain system and a robust reference model. The behaviour of the NN controller is characterised using the Differential Mean Value Theorem (DMV), Integral Quadratic Constraints (IQCs) and Linear Approximation (LA). A new dynamical system is formulated to describe the error, facilitating worst-case \(\mathcal{L}_2\) and \(\mathcal{L}_{\infty}\) error analysis through Lyapunov theory and IQCs. The proposed method is validated on two case studies—the Single-Link Robot Arm and the Apollo Lander—demonstrating its efficacy and versatility in assessing the robustness and performance of NN controllers in complex, uncertain environments.

\keywords{NN Controller, Lyapunov, IQCs, Robustness Analysis, Safety and Verification.}
\end{abstract}
\end{frontmatter}
\section{Introduction}
Neural Networks (NNs) show great performances when adopted for tasks where Artificial Intelligence (AI) based solutions are considered. Due to the latest advancements in AI, stabilising dynamical systems with NN controllers is invigorated \cite{ref1,ref0}. However,  feedback systems with NN controllers still suffer from a lack of robustness, safety certificates and,  validation due to their complex, opaque, and large-scale nature structures \cite{ref2, ref2b}.  Particularly, NN controllers have many types of nonlinear activation functions \cite{ref3} in addition to their vulnerability to adversarial attacks and system uncertainties \cite{ref4,ref5}. These drawbacks make the application of classical stability analysis methods such as Lyapunov theory to NN controller-based closed-loop systems difficult. This leads to limited adoption of NN controllers in safety-critical applications \cite{ref6,ref7}. 

Over the course of the last three decades,  the framework of uncertain system systematic analysis via Integral Quadratic Constraints (IQCs) was developed and used \cite{ref8,ref9}. The IQCs approach is proven to be efficient in capturing the properties of various classes of uncertainties such as slope-restricted nonlinearities, dynamics that can not be modeled,and time delays in addition to systems' nonlinearities \cite{ref10,ref11}. The flexibility of the approach and its generality made it one of the most important techniques utilized to perform and assess an uncertain dynamical system's stability analysis and performance, respectively \cite{ref12}. In \cite{ref13},  the problem of stability analysis of nonlinear uncertain systems is formulated using the definition of the augmented plant via the state-space factorization of the IQC.  The certificates for the region of attraction are established with both soft and hard IQC factorizations. Then,   the formulated problem is solved using a numerical approach such as Sum of Squares (SOS). Based on solving Linear Matrix Inequalities (LMIs), \cite{ref14} proposes a novel dynamic multipliers-dependent shift of the  Kalman-Yakubovich-Popov (KYP) \cite{ref15} certificate that is expressed using constraints.  The novel multipliers allow the formulation of the regional stability properties within IQCs theory, the verification of output constraints and the hard time-domain state. \cite{ref16} derives the certification conditions for local $\mathcal{L}_{2}$ gain of locally stable interconnected systems. Using estimation of reachability, the approach enhances the local $\mathcal{L}_2([0, \infty))$ gain calculation. Therefore, a set of local (IQCs) can be established for a fixed nonlinear dynamical system. Then by the local IQCs, a local $\mathcal{L}_2([0, \infty))$ gain is obtained. The possibility of using IQCs in the analysis systems subject to saturated linear feedback and the design of its control is the objective of the work in \cite{ref17}.  The research there focuses on establishing a IQCs-based conditions set under which an ellipsoid is contractively invariant for such a system. The paper results show that these set invariant conditions are necessary for such a study. 

 Several formal verification methods recently propose to investigate the stability of closed-loop systems interconnected with NN controllers. In \cite{ref18}, the authors study the verification problems for a class of piece-wise linear systems with NN controllers. By making an extension to the reachable set estimation, a reduction of the safety verification is established to check for empty junctions between unsafe regions and the reachable set. In \cite{ref19}, converting the NN controller into a hybrid system equivalent to the original one, the work addresses the issue of safety guarantees for autonomous systems with NN controllers. Moreover, the scalability of the proposed approach using Taylor series approximation with worst-case error bounds is looked at. Using a similar approach as \cite{ref19},  the work in \cite{ref20} emphasizes on NNs with both tanh and sigmoid activation functions. Through the Taylor model preconditioning and shrink wrapping,  a Taylor-model-based reachability algorithm is developed and a parallel implementation is provided. \cite{ref21} presents an approach, which permits the closed-loop system with NN controllers to involve different types of perturbations by using IQCs technique and to capture their input/output dynamical behaviour. 
 
 In this paper, we aim at certifying the safety and providing robustness guarantees for NN controllers against various systems uncertainties including unmodeled dynamics, slope-restricted non-linearities, and time delays. Inspired by the comparison principle \cite{comp,zenati}, the core idea of this work is to provide guarantees in maintaining the output of the uncertain system with a trained NN controller close to an ideal reference closed-loop model when its inputs change within a bounded set. To this end, a novel approach to analyse the dynamical error robustness of the feedback system with an NN controller and a reference closed-loop model is presented in this paper. Since analysing such systems is complicated, the problem is reformulated as the dynamical tracking errors between the uncertain interconnected system with an NN and the reference closed-loop model. Less conservative expression  of the neural controller error is then performed using the differential mean value theorem (DMV) and integral quadratic constraints (IQCs) in order to eliminate the neural controller's nonlinearity. As the time integral of the square of the error (ISE) is used to evaluate the system's performance \cite{oxford}, the bounded sets of the error between the outputs of the feedback system with an NN controller and a reference closed-loop model are derived based on generating dissipation-inequality conditions. These conditions of  bound the worst-case $\mathcal{L}_2\rightarrow \mathcal{L}_{2}$ and $\mathcal{L}_2\rightarrow \mathcal{L}_{\infty} $ induced gain of the dynamical error system (The Relative Integral Square Error (RISE) and the Supreme Square Error (SSE)), respectively. In addition, Lyapunov theory is integrated with IQCs-based techniques to accomplish the analysis of the worst-case gains. This analysis determines the worst-case ISE between the closed-loop reference model and the neural-controlled uncertain system. Using the difference between the closed-loop reference system and the interconnected system with a NN is an appealing approach for the following reasons: \textit{(i)} it can eliminate the nonlinearity of the NN controller by applying the Differential Mean Value theorem, which is more suitable and easier to analyse; \textit{(ii)} The work in \cite{ref30} considered the trivial initial conditions to facilitate the analysis. However, these conditions limit the application of their results to real challenging problems. Indeed, in the real world, the initial condition has a big effect on the stability of the system and generally, it is not null. Our keep-close technique can function on any initial condition of the system. Using the difference between the closed-loop reference system and the interconnected system with an NN controller, the initial conditions become null, which makes the analysis easier, plausible, and more general.  In \cite{ref2}, the authors analysed the stability of feedback systems with NN controllers using the IQCs approach to capture their input/output behaviour. Their theoretical result relies on semi-definite programming to estimate the Attraction Region (ROA) of the equilibrium point. Our keep-close approach is more general as it covers tracking in general where the convergence to an equilibrium point is a particular case. \textit{(iii)} Moreover, from a verification and safety point of view, the keep-close approach is more practical than the result in \cite{ref2} because it can indicate the occurrence likelihood of any undesirable behaviour before the convergence to the equilibrium point. The novel approach outlined in this paper has been implemented and tested in practice. Validation was carried out on two specific problems: the Single-Link Robot-Arm Control Problem, which focused on a single output scenario, and the Deep Guidance and Control of Apollo Lander, which dealt with multiple outputs. These tests provide empirical support for the efficacy and adaptability of the proposed approach across different application domains.

The rest of this paper is organized as follows: The problem
formulation and preliminaries are given in Section \ref{sec2}. Section \ref{sec3} is devoted to the robustness analysis when the system is submitted to uncertainties. 
In Section \ref{sec3}, a numerical example
is provided to illustrate the results of our proposed solution. Conclusions are given in
Section  \ref{sec4}. 
\subsection{\color{blue}Notation}
\newtheorem{deff}{Definition}

\begin{itemize}
    \item  $\mathbb{R}$, $\mathbb{R}_+$, and $\mathbb{C} $ refer to the set of real, non-negative real, and complex numbers, respectively. Also, in the complex plane, the notation $\mathbb{RH}$ denote to all analytic functions within the closed exterior of the unit disk. $M^T$ and $M^*$ are the transpose and complex conjugate transpose of the matrix $M$, respectively.
\item Consider the LPV system $\Sigma$ where its  matrices of state space depend on a time-varying
parameter vector $s(t): \mathbb{R}_+\rightarrow \mathcal{S} $ and $\mathcal{S} \subset \mathbb{R}^n$. $s(t)$ a continuously differentiable function of time where the parameter rates of variation $\dot{s}(t): \mathbb{R}_+\rightarrow \dot{\mathcal{S}}$. $\dot{\mathcal{S}}$ is a hyperrectangle given by: 
\begin{equation}\label{dots}
    \dot{\mathcal{S}}=\left\{ \rho \in \mathbb{R}^n~| ~
    \underline{\rho}_i \leq \rho_i \leq \overline{\rho}_i,~i=1\cdots n\right\}
\end{equation}
Let a parameter-dependent matrix $P$ be a continuously differentiable function of the parameter $s$, given by: $P: \mathcal{S}\rightarrow\mathbb{S}^n$, 
where $\mathbb{S}^n$ refers to the set of $n\times n$ symmetric matrices. The differential operator $\partial P: \mathcal{S} \times \dot{\mathcal{S}}\rightarrow \mathbb{S}^n$ is given by: 
\begin{equation}
    \partial P(s,\dot{s})=\sum\limits_{k=1}^n \dfrac{\partial P(s)}{\partial s_k}\dot{s}_k
\end{equation}
\item The space $\mathcal{L}_2([0, \infty))$ denotes the set of functions $\varphi: \mathbb{R}_+ \rightarrow \mathbb{R}^n$ satisfying $||\varphi||_2<\infty$
 \begin{eqnarray}
 ||\varphi||_2=  \left[\int\limits_{0}^{+\infty}\ \ \varphi(\tau)^T  \varphi(\tau)\,\mathrm{\textit{d} \tau}\right]^{\frac{1}{2}}
\end{eqnarray}
\item The space $\mathcal{L}_\infty$ refers to the set of functions $\varphi: \mathbb{R}_+ \rightarrow \mathbb{R}^n$ satisfying $||\varphi||_\infty<\infty$
 \begin{eqnarray}\label{iqctime}
 ||\varphi||_\infty=  \sup_{t\in [0,\infty)}\left[ \varphi(t)^T  \varphi(t)\right]^{\frac{1}{2}}
\end{eqnarray}
\item We recall here the Differential Mean Value (DMV) Theorem:

\begin{lemma}[\color{blue}DMV Theorem \cite{ref23,ref23b}]\label{theo1}
Consider a function \(g : \Omega \rightarrow \mathbb{R}^n\) that is differentiable, where \(\Omega\) is an open subset of \(\mathbb{R}^n\). Given two points \(a\) and \(b\) in \(\Omega\), such that the open line segment \(]a, b[\) lies entirely within \(\Omega\), there exists at least one point \(c\) on this segment for which
\begin{equation}
    g(b) - g(a) = \langle \nabla_x g(c), b - a \rangle,
\end{equation}
where \(c\) lies on the curve \(\varrho(\nu) = a + \nu(b - a)\) for some \(\nu \in [0, 1]\) called the convex domain $\mathbb{Co}(a,b)$. Here, \(\nabla_x g(c)\) denotes the gradient of \(g\) at \(c\), defined as
\begin{equation}
    \nabla_x g(c) = \left( \frac{\partial g}{\partial x_1}(c), \frac{\partial g}{\partial x_2}(c), \ldots, \frac{\partial g}{\partial x_n}(c) \right),
\end{equation}
representing the vector of partial derivatives of \(g\) with respect to each component of the input vector at the point \(c\).
\end{lemma}
 \item \begin{lemma}[Cauchy-Schwarz Inequality]
Let $\phi_1(t)$ and $\phi_2(t)$ be two square-integrable functions defined on the domain $[0, \infty)$. Then, the following inequality holds:
\[
\left| \int_0^\infty \phi_1(t) \phi_2(t) \, dt \right| \leq \sqrt{\int_0^\infty |\phi_1(t)|^2 \, dt} \cdot \sqrt{\int_0^\infty |\phi_2(t)|^2 \, dt}.
\]
\end{lemma}
\end{itemize}
\section{Problem formulation}\label{sec2}
 \subsection{\color{blue}Keep-Close Approach}
A robustness analysis of feedback systems utilising neural network (NN) controllers against potential uncertainties is presented herein. To achieve this objective, the output of the closed-loop system, equipped with an NN controller, is maintained closely to that of an ideal and trusted closed-loop reference model, especially when its input varies within a bounded set. Consider the closed-loop system, interconnected with an NN-based control system and subject to uncertainty \(\Delta_{\delta}\) as illustrated in Figure \ref{figplant}. This system, specifically the uncertain plant \(F(P,\Delta_{\delta})\) in conjunction with the NN controller \(\pi\), embodies the perturbed plant \(F(P,\Delta_{\delta})\). It represents an interconnection between a nominal plant \(P\) and an uncertainty \(\Delta_{\delta}\), which can be mathematically described by the following dynamical equations:
\begin{equation}\label{iqc2}
\Sigma_{F(P,\Delta_{\delta})}^{\pi}:=  \left\{  \begin{array}{l}\dot{x}(t)=Ax(t)+Bu(t)+\tilde{B}\delta(t)\\
    y(t)=Cx(t)+D u(t)+\tilde{D} \delta(t)\\
      \delta(t)=\Delta_{\delta} (y(t))\\
      u(t)=\pi(d(t), x(t)) \\
      x(0)=x_0
\end{array}\right.
\end{equation}
where $x \in \mathbb{R}^{n}$ is the system's state vector, $d \in \mathbb{R}^{d}$ is the input reference signal control, inputs of the neural controller $u\in \mathbb{R}^{m}$ and $(A,B,C,D, \tilde{B}, \tilde{D})$ are known with adequate dimensions, and the perturbation operator $\Delta_{\delta}$  is a bounded, causal. In this study,  we aim to ensure that the closed-loop system with a NN controller $\Sigma_{F(P,\Delta_{\delta})}^{\pi}$ keeps its outputs close to a  given reference model $\Sigma_{P}^{\pi^*}$ (a closed-loop system with an ideal, and classically proven controller). This reference model can be given by:
\begin{equation}\label{iqc1}
 \Sigma_{P}^{\pi^*}:=  \left\{   \begin{array}{l}\dot{\hat{x}}(t)=A\hat{x}(t)+B\hat{u}(t) \\
    \hat{y}(t)=C\hat{x}(t)+D\hat{u}(t)\\
    \hat{ u}(t)=\pi^*(d(t),\hat{x}(t)) \\
    \hat{ x}(0)=x_0
    \end{array}\right.
\end{equation}
with the reference system's state vector \(\hat{x} \in \mathbb{R}^{n}\), control inputs \(\hat{u} \in \mathbb{R}^{m}\), and the system matrices \((A, B, C, D)\) being the same as those mentioned in (\ref{iqc2}). After designing the controller \(\hat{u}(t) = \pi^*(d(t), \hat{x}(t))\) for the reference system \(\Sigma_{P}^{\pi^*}\), the system is transformed as follows:
\begin{equation}\label{refcon}
 \Sigma_{P}^{\pi^*}:=  \left\{   \begin{array}{l}\dot{\hat{x}}(t)=A_r\hat{x}(t)+B_r d(t) \\
    \hat{y}(t)=C_r\hat{x}(t)+D_r d(t)\\
    \hat{ x}(0)=x_0
    \end{array}\right.
\end{equation}
To accomplish our aim stated above, we present an analysis of the dynamical error system between the two closed-loop systems (i.e., the reference model and the system closed with an NN controller) in the following section. We start by mathematically describing the dynamical error system.

Using (\ref{iqc2}) and (\ref{iqc1}), let us assume that the state of the dynamic error system is defined as \(\zeta(t) = x(t) - \hat{x}(t)\) and \(z(t) = y(t) - \hat{y}(t)\). Consequently, the time derivative of the error state is given by \(\dot{\zeta}(t) = \dot{x}(t) - \dot{\hat{x}}(t)\), which leads to:
\begin{subequations}\label{erroriqc}
\begin{align}
\dot{\zeta}(t) &= A\zeta(t) + B\mu(t) + \tilde{B}\delta(t), \label{erroriqc-a} \\
z(t) &= C\zeta(t) + D\mu(t) + \tilde{D} \delta(t), \label{erroriqc-b} \\
\mu(t) &= \pi(d(t),x(t)) - \pi^*(d(t), \hat{x}(t)), \label{erroriqc-c} \\
\zeta(0) &= 0. \label{erroriqc-d}
\end{align}
\end{subequations}
where \(\mu(t)\) represents the controller error (between the NN controller and the ideal controller adopted in the closed-loop reference model). Furthermore, the initial condition of the dynamical error state \(\zeta(t)\) is \(\zeta(0) = 0\) because \(x(0) = \hat{x}(0) = x_0\).
\begin{figure}[thpb]
      \centering
   \includegraphics[scale=0.1]{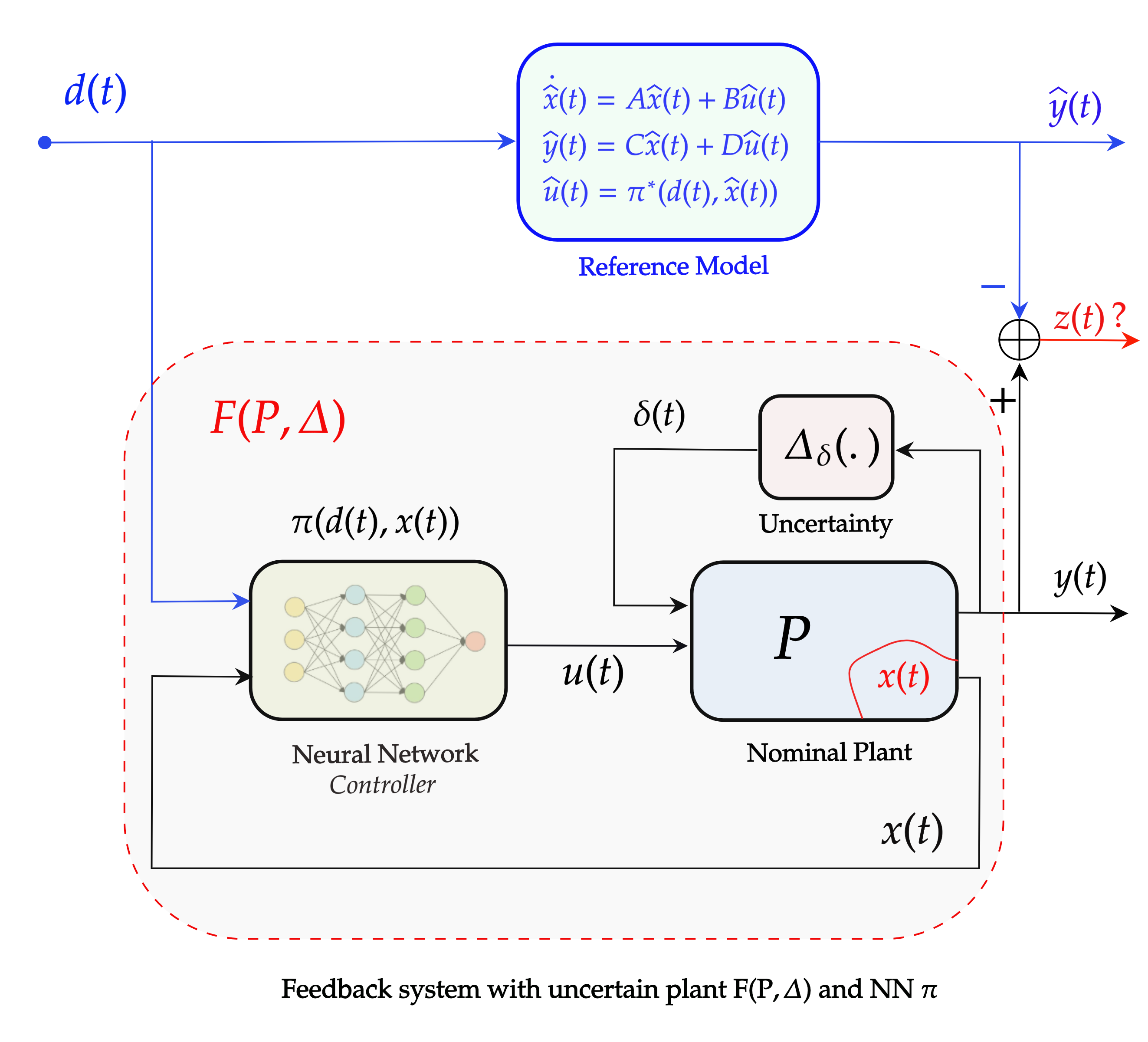}
\caption{Keep-Close Approach Illustration. }
      \label{figplant}
   \end{figure} 
 \subsection{\color{blue}Integral Quadratic Constraints (IQCs) Concept}
   The key concept of IQCs is instead of investigating the dynamical behaviour of the system that contains uncertainty $\Delta$, we analyse the system where $\Delta$ is removed and the signals $(p(t), ~q(t))$ are utilised to enforce the following constraints:
\begin{eqnarray}\label{iqc}
    \int\limits_{-\infty}^{+\infty}\ \ \begin{bmatrix}
 \hat{p}(j\omega)\\
 \hat{q}(j\omega)
\end{bmatrix}^* \Pi(j\omega) \begin{bmatrix}
 \hat{p}(j\omega)\\
 \hat{q}(j\omega)
\end{bmatrix}\,\mathrm{\textit{d} \tau}\geq0   
\end{eqnarray}
where $\Pi(j\omega)$ is named an “IQC multiplier” or can simply called a “multiplier”,  $\hat{p}(j\omega)$ and $\hat{q}(j\omega)$ are Fourier transforms of $p(t)$ and $q(t)$, respectively. Since (\ref{iqc}) is valid for all admissible choices of $\Delta$, then all properties  can  be proved for the constrained system are hold also for the original system. Therefore, the mathematical relationship between input-output quantities of system components can be described via IQCs.  The IQC in (\ref{iqc}) can be described in the time domain based on Parseval Theorem \cite{ref29a}. Assuming that $\Pi$ is function that is uniformly bounded and rational. Then,  $\Pi(j\omega)$  can be factorized as:   
\begin{equation}
    \Pi(j\omega)=\Psi(j\omega)^*\mathcal{M}\Psi(j\omega),
\end{equation}
 where $\Psi(j\omega)\in \mathbb{RH}$ and $\mathcal{M}$ represents a constant matrix \cite{ref11}.  $\Psi(j\omega)$ is expressed as:
 \begin{equation}
   \Psi(j\omega):=C_{\Psi}j[\omega I- A_{\Psi}][B_{\Psi1} ~B_{\Psi2}] + [D_{\Psi1}~ D_{\Psi2}]
\end{equation}
 Then, the IQC of (\ref{iqc}) can be written  as:
 \begin{eqnarray}\label{iqctime}
    \int\limits_{0}^{+\infty}\ \ r(\tau)^T \mathcal{M} r(\tau)\,\mathrm{\textit{d} \tau}\geq0
\end{eqnarray}
 where $r(t)$ is the output of the following linear system:
 \begin{equation}\label{iqcfilter}
   \Psi:= \left\{\begin{array}{ll}\dot{\psi}(t)&=A_{\Psi}\psi(t)+B_{\Psi1}p(t)+B_{\Psi2}q(t)\\
    r(t)&=C_{\Psi}\psi(t)+D_{\Psi1}p(t)+ D_{\Psi2}q(t)\\
      \psi(0)&=0
\end{array}\right.
\end{equation}
The time-domain constraint   (\ref{iqctime}) is generally satisfied over infinite time intervals. A hard IQC satisfies the following more restrictive conditions: If $\Delta$ is any causal and bounded operator that fulfills (\ref{iqc}) then:     
    \begin{eqnarray}
    \int\limits_{0}^{T}\ \ r(\tau)^T \mathcal{M} r(\tau)\,\mathrm{\textit{d} \tau}\geq0
\end{eqnarray}
 holds for all $T \geq 0$. Conversely, the soft IQCs in the time-domain constraint do not require to hold overall finite time intervals. This property is critical because the major technical and pedagogical problems in the IQCs framework occur when we utilize soft IQCs. Unfortunately, there is an ambiguity that surrounds both the terms soft IQC and hard IQC. In particular, the uniqueness of factorization of $\Pi(j\omega)$ as $  \Psi^*(j\omega)\mathcal{M}\Psi(j\omega)$, is not unique.   For clarity, the following definition will be adopted in the rest of the paper.
\begin{deff}[\color{blue} Hard IQC \cite{ref30}]
A uniformly bounded and rational function $\Pi : j\mathbb{R} \rightarrow \mathbb{C}$
admits a hard IQC factorization if there exists 
$\Psi(j\omega)\in \mathbb{RH}$ and matrix $\mathcal{M}$ with $\mathcal{M}_{ij}\in \mathbb{C}$ , such
that $\Pi(j\omega)= \Psi^*(j\omega)\mathcal{M}\Psi(j\omega)$, and any bounded, causal operator $\Delta$
which satisfies the IQC defined by $\Pi(j\omega)$ also satisfies:
\begin{eqnarray}\label{iqctime2}
    \int\limits_{0}^{T}\ \ r(\tau)^T \mathcal{M} r(\tau)\,\mathrm{\textit{d} \tau}\geq0
\end{eqnarray}
for all $T\geq0$ and for all $p(t)\in\mathcal{L}_2([0, \infty))([0~\infty))$, $q= \Delta(p)$. $(\Psi(j\omega), \mathcal{M} )$ is a hard IQCs factorization of $\Pi$.
\end{deff}
\subsection{\color{blue}Controller Error Analysis }
This subsection analyses the controller error \(\mu(t)\) in (\ref{erroriqc-c}), (i.e., the difference between the NN controller and the ideal controller adopted in the reference closed-loop model). The objective of this analysis is to identify a less conservative and more accurately representative expression for \(\mu(t)\), from which we can extract and dissociate certain and uncertain quantities. Achieving this will allow us to directly use this refined expression in the dynamic error system as referenced in (\ref{erroriqc}). To reach the aforementioned objective, we start by introducing the following lemma:
\begin{lemma}[Controller Error] \label{lemma4}
Consider the dynamics of the error as defined in (\ref{erroriqc}) and Let \(c\) the trajectory lies on the curve \(\vartheta (\nu) = x + \nu(\hat{x}- x)\) for some \(\nu \in [0, 1]\) called the convex domain $\mathbb{Co}(x,\hat{x})$. The controllers error \(\mu(t)\), as given in (\ref{erroriqc-c}), can be expressed as:
\begin{equation} \label{erroriqccont}
\mu(t)=\Lambda(s) \zeta(t)+\frac{\partial \alpha}{\partial \hat{x}}(0,0)\hat{x}(t)+\frac{\partial \alpha}{\partial d }(0,0) d(t)+\epsilon (t),
 \end{equation}
 where the trajecory $s(t)=[d(t), c(t)]^T$, the LPV matrix $ \Lambda(s(t))=\nabla_x \pi(d(t),c(t))$ is the Jacobian matrix 
 of the neural network, $\epsilon (t)=\mathcal{O}(d(t),\hat{x}(t))$ and  the training error  $\alpha(d(t),\hat{x}(t))=\pi(d(t), \hat{x}(t))-\pi^*(d(t), \hat{x}(t))$.
\end{lemma}
\begin{proof} From   (\ref{erroriqc-c}), the controller error $\mu(t)$ is given by:
\begin{equation}
\mu(t)=\pi(d(t), x(t))-\pi^*(d(t), \hat{x}(t))
 \end{equation}
Adding and subtracting $\pi(d(t),\hat{x}(t))$, the error approximation $\mu(t)$ can be expressed by:
\begin{equation}\label{adsu}
\begin{array}{ll}
 \mu(t)=\underbrace{\pi(d(t),\hat{x}(t))-\pi^*(d(t),\hat{x}(t))}_{=\alpha(d(t),\hat{x}(t))}+ \underbrace{\pi(d(t),x(t))-\pi(d(t),\hat{x}(t))}_{=\beta(t)}=\alpha(d(t),\hat{x}(t))+\beta(t)
\end{array}
 \end{equation}
It is clear that $\alpha(d(t),\hat{x}(t))$ is the training NN controller error  given by: 
\begin{equation}\label{epsilon}
 \alpha(d(t),\hat{x}(t))=\pi(d(t), \hat{x}(t))-\pi^*(d(t), \hat{x}(t))
 \end{equation}
 Knowing that $d(t),\hat{x}(t) \in \mathcal{L}_2([0, \infty))$  because it is the state of the reference model controlled by the ideal controller $\pi^*(d(t),\hat{x}(t))$. Therefore, by using Taylor approximation we can write the approximation of $\alpha(d(t),\hat{x}(t))$ as follows
\begin{equation}
\alpha(d(t), \hat{x}(t))= \alpha(0, 0)+\frac{\partial \alpha}{\partial \hat{x}}(0,0)\hat{x}(t)+\frac{\partial \alpha}{\partial r }(0,0) d(t)+\underbrace{\mathcal{O}(d(t),\hat{x}(t))}_{=\epsilon (t)}
\end{equation}
Clearly, if  $\hat{x}(t)=0$ and  $d(t)=0$ there is no controller action i.e. $\alpha(0,0)=0$, therefore
 \begin{equation}\label{eq19}
  \alpha(d(t), \hat{x}(t))= \frac{\partial \alpha}{\partial \hat{x}}(0,0)\hat{x}(t)+\frac{\partial \alpha}{\partial r }(0,0) d(t)+\epsilon (t)
\end{equation}
On the other hand, using the Differential Mean Value Theorem in Lemma \ref{theo1} and the fact that 
the  function $\pi$ is  continuous and differentiable  on convex hull of the set $\mathbb{Co}(x(t),\hat{x}(t))$, then, given two points \(x(t)\) and \(\hat{x}(t)\) in \(\Omega\) at time $t$, such that the open line segment  between \(x, \hat{x}\) lies entirely within \(\Omega\), there exists at least one point \(c(t)\) on this segment for which
\begin{equation}\label{eq18}
    \beta(t)= \pi(d(t),x(t))-\pi(d(t),\hat{x}(t)) = \langle \nabla_x \pi(d(t),c(t)), x(t) - \hat{x}(t) \rangle=\langle \nabla_x \pi(d(t),c(t)), \zeta(t) \rangle,
\end{equation}
where \(c\) lies on the curve \(\varrho(\nu) = x + \nu(\hat{x}- x)\) for some \(\nu \in [0, 1]\) called the convex domain $\mathbb{Co}(x,\hat{x})$. and 
\begin{equation}
     \nabla_x \pi(d(t),c(t)) = \left( \frac{\partial \pi}{\partial x_1}(d(t),c(t)), \frac{\partial \pi}{\partial x_2}(d(t),c(t)), \ldots, \frac{\partial \pi}{\partial x_n}(d(t),c(t)) \right),
\end{equation}
The equations (\ref{eq19}) and   (\ref{eq18})  allow us to conclude that:
 \begin{equation} \label{rev1}
\mu(t)=\underbrace{\langle \nabla_x \pi(d(t),c(t)), \zeta(t) \rangle}_{\Lambda(s) \zeta(t)}+\frac{\partial \alpha}{\partial \hat{x}}(0,0)\hat{x}(t)+\frac{\partial \alpha}{\partial r }(0,0) d(t)+\epsilon (t),
 \end{equation}
 where the trajectory $s(t)=[d(t), c(t)]^T$, the LPV matrix $ \Lambda(s(t))=\nabla_x \pi(d(t),c(t))$ is the Jacobian matrix 
 of the neural network, $\epsilon (t)=\mathcal{O}(d(t),\hat{x}(t))$ and  the training error  $\alpha(d(t),\hat{x}(t))=\pi(d(t), \hat{x}(t))-\pi^*(d(t), \hat{x}(t))$.
\end{proof}
The result in (\ref{rev1}) provides a less conservative and more representative expression for \(\mu(t)\). The specific quantity \(\Lambda(s(t)) \zeta(t)\) is influenced by the Jacobian of the neural controller $\nabla_x \pi(d(t),c(t))$, which incorporates the weights of the neural controller, and it is also dependent on the state of the dynamic error system \(\zeta(t)\). For uncertain components, the term \(\epsilon(t)\) is determined by the reference state \(\hat{x}(t)\) and the reference signal \(d(t)\), both of which belong to the \(\mathcal{L}_2([0, \infty))\) space.  In the next subsection, we will prove that \(\epsilon(t)\) also belongs to \(\mathcal{L}_2([0, \infty))\) and it is bounded, causal operator, leveraging this property.
\subsection{\color{blue}IQCs Characteristics of Uncertainties}
The perturbations acting on the feedback system can encompass various types of uncertainties, including unmodeled dynamics, saturation effects, time delays, and slope-restricted nonlinearities. In the next lemma, we will demonstrate that if \(d(t)\) and \(\hat{x}(t)\) belong to \(\mathcal{L}_2([0, \infty))\), then \(\epsilon(t) = \mathcal{O}(d(t), \hat{x}(t))=\Delta_\epsilon (\eta(t))\) where $\eta(t) =[d(t), \hat{x}(t)]^T$ also belongs to \(\mathcal{L}_2([0, \infty))\) and it is bounded and causal operator. Establishing this result will enable us to apply the IQCs technique to address the uncertainty \(\epsilon(t)\) in the subsequent analysis.
\begin{lemma}
Let  $\eta(t) =[d(t), \hat{x}(t)]^T$ belongs to \(\mathcal{L}_2([0, \infty))\). Then, the quantity \( \epsilon(t) =\Delta_{\epsilon} (\eta(t)) = \mathcal{O}(d(t), \hat{x}(t))\) also belongs to \(\mathcal{L}_2([0, \infty))\) and it is bounded and causal operator. 
\end{lemma}
\begin{proof}
By assumption, \(d(t)\) and \(\hat{x}(t)\) belong to the space \(\mathcal{L}_2([0, \infty))\). Additionally, we have \(\epsilon(t) = \mathcal{O}(d(t), \hat{x}(t))\), which implies that there exist positive constants \(\kappa_1 \in ]0, 1[\) and \(\kappa_2 \in ]0, 1[\) such that:
\begin{equation}\label{tho}
    |\epsilon(t)| = |\mathcal{O}(d(t), \hat{x}(t))| \leq \kappa_1 |d(t)| + \kappa_2 |\hat{x}(t)|.
\end{equation}
This implies:
\begin{equation}
    \int\limits_{0}^{+\infty} |\epsilon(t)|^2 \,\mathrm{d}t \leq \int\limits_{0}^{+\infty} \left(\kappa_1 |d(t)| + \kappa_2 |\hat{x}(t)|\right)^2 \,\mathrm{d}t.
\end{equation}

Expanding the squared term and applying the triangle inequality under the integral gives:
\begin{align}
    \int\limits_{0}^{+\infty} |\epsilon(t)|^2 \,\mathrm{d}t &\leq \int\limits_{0}^{+\infty} \left( \kappa_1^2 |d(t)|^2 + \kappa_2^2 |\hat{x}(t)|^2 + 2\kappa_1\kappa_2 |d(t)||\hat{x}(t)| \right) \,\mathrm{d}t. 
\end{align}
Since \(d(t)\), \(\hat{x}(t) \in \mathcal{L}_2([0, \infty))\), their respective norms are finite:
\begin{equation}
    \|d(t)\|_{\mathcal{L}_2}^2 = \int\limits_{0}^{+\infty} |d(t)|^2 \,\mathrm{d}t < \infty, \quad \|\hat{x}(t)\|_{\mathcal{L}_2}^2 = \int\limits_{0}^{+\infty} |\hat{x}(t)|^2 \,\mathrm{d}t < \infty.
\end{equation}
The cross-term \(2\kappa_1\kappa_2 |d(t)||\hat{x}(t)|\) is also integrable because of the Cauchy-Schwarz inequality:
\begin{equation}
    \int\limits_{0}^{+\infty} |d(t)||\hat{x}(t)| \,\mathrm{d}t \leq \|d(t)\|_{\mathcal{L}_2} \|\hat{x}(t)\|_{\mathcal{L}_2}.
\end{equation}
Thus, \(\epsilon(t) \in \mathcal{L}_2([0, \infty))\) and it is bounded operator. Also, it is clear that the output of $\epsilon(t)$ at any time $t$ depends only on the input $(d(t), \hat{x}(t))$ up to that time $t$ i.e $\epsilon(t)$ is causal operator. This completes the proof and then the IQCs technique can be applied  to address the uncertainty \(\epsilon(t)=\Delta_\epsilon (\eta(t))\).
\end{proof}
 Moreover, the input-output relationship of the uncertainties, denoted by \(\delta(t)=\Delta_{\delta}(y (t))\), can be mathematically described using IQCs. Thus, the IQC technique can be employed to characterise the uncertainties \(\epsilon(\cdot)\) and \(\Delta_{\delta}(\cdot)\), leading to the following definition:
\newtheorem{assum}{Assumption}
\begin{assum}\label{assumption1}
Let  Assume that $y(t)$ and $\eta(t)=[d(t) \hat{x}(t) ]^T$ belong to \(\mathcal{L}_2([0, \infty))\), and are governed by bounded, causal operators \(\Delta_{\delta}(\cdot)\) and \(\Delta_{\epsilon}(\cdot)\), respectively. These operators satisfy the Integral Quadratic Constraints (IQCs) defined by   \(\Pi_{{\delta}}(j\omega)\) for \(\delta(t)\) and \(\Pi_{\epsilon}(j\omega)\) for \(\epsilon(t)\), respectively. The input-output relationship of the uncertainties, \(\Delta_{\delta}(\cdot)\) and \(\epsilon(\cdot)\), can be mathematically described by IQCs as follows:
\begin{equation}\label{iqctime3}
\begin{aligned}
    &\int_{0}^{T} r_{{\delta}}(\tau)^T \mathcal{M}_{{\delta}} r_{{\delta}}(\tau) \, d\tau \geq 0, ~~
    &\int_{0}^{T} r_{\epsilon}(\tau)^T \mathcal{M}_{\epsilon} r_{\epsilon}(\tau) \, d\tau \geq 0,
\end{aligned}
\end{equation}
for all \(T \geq 0\) and for all signals \( y(t), \eta(t) \in \mathcal{L}_2([0, \infty))\). Here, \((\Psi_{{\delta}}(j\omega), \mathcal{M}_{{\delta}})\)  and \((\Psi_{\epsilon}(j\omega), \mathcal{M}_{\epsilon})\) represent a hard IQC factorization of \(\Pi_\delta\) and  \(\Pi_{\epsilon}\), respectively.
\end{assum}
\begin{corollary}\label{corollaryExample}
Given Assumption \ref{assumption1}, the signals \(r_{\delta}(t)\) and \(r_\epsilon(t)\), as defined in (\ref{iqctime3}), are the outputs of the following linear time-invariant (LTI) systems, respectively:
\begin{equation}\label{delta}
\begin{cases}
\dot{\psi}_{\delta}(t) = A_{\delta}\psi_{\delta}(t) + B_{\delta 1}\delta(t) + B_{\delta 2}y(t), \\
r_{\delta}(t) = C_{\delta}\psi_\delta(t) + D_{\delta 1}\delta(t) + D_{\delta 2}y(t), \\
\psi_{\delta}(0) = 0,
\end{cases}
\end{equation}
and
\begin{equation}\label{epsilon}
\begin{cases}
\dot{\psi}_\epsilon(t) = A_{\epsilon}\psi_\epsilon(t) + B_{\epsilon1}\epsilon(t) + B_{\epsilon2}\eta(t), \\
r_\epsilon(t) = C_{\epsilon}\psi_\epsilon(t) + D_{\epsilon1} \epsilon(t) + D_{\epsilon2}\eta(t), \\
\psi_\epsilon(0) = 0,
\end{cases}
\end{equation}
\end{corollary}
 where $\Psi_{\delta}$ and $\Psi_\epsilon$ represent the ‘virtual’ filters as in  (\ref{iqcfilter}) applied to the inputs $y$ and $\eta$ and the outputs $\Delta_{\delta}$ and $\epsilon$ of $\Delta_{\delta}$ and $\epsilon$ and a constraint on the outputs $r_{\delta}$ and $r_\epsilon$  of  (\ref{iqcfilter}). Thus, an extended filter can be written as follows: 
   \begin{equation}\label{filter}
   \Psi:= \left\{\begin{array}{ll}\dot{\xi}(t)&=A_{\xi}\xi(t)+B_{\xi1}q(t)+B_{\xi2}p(t)\\
    r(t)&=C_{\xi}\xi(t)+D_{\xi1}q(t)+ D_{\xi2}p(t)\\
      \xi(0)&=0
\end{array}\right.
\end{equation}
where \begin{eqnarray*}\xi(t)=  \begin{bmatrix}
          \psi_{\delta}(t)\\\psi_\epsilon(t)
         \end{bmatrix},~~
 p(t)=  \begin{bmatrix}
          y(t)\\\eta(t)
         \end{bmatrix},~q(t)=  \begin{bmatrix}
          \delta(t)\\\epsilon(t)
         \end{bmatrix}
\end{eqnarray*}
and the matrices of  (\ref{filter}) are defined as follow:
\begin{eqnarray*}
 A_{\xi}=  \begin{bmatrix} A_{{\delta}}& 0\\ 0& A_{\epsilon}\end{bmatrix},  B_{\xi1}=  \begin{bmatrix} B_{{\delta}1}& 0\\ 0& B_{\epsilon1}\end{bmatrix},~  B_{\xi2}=  \begin{bmatrix} B_{{\delta}2}& 0\\ 0& B_{\epsilon2}\end{bmatrix}\\
 C_{\xi}=  \begin{bmatrix} C_{{\delta}}& 0\\ 0& C_{\epsilon}\end{bmatrix},  D_{\xi1}=  \begin{bmatrix} D_{\delta1}& 0\\ 0& D_{\epsilon1}\end{bmatrix},~  D_{\xi2}=  \begin{bmatrix} D_{\delta2}& 0\\ 0& D_{\epsilon2}\end{bmatrix}
\end{eqnarray*}
 \section{ Robust Tracking Analysis via Standard IQCs}\label{sec3}
 Considering the error dynamics system in (\ref{erroriqc}) and the conservative expression of \(\mu\) presented in (\ref{erroriqccont}), let us define \(\chi(t)\) as the extended state vector, given by
\[
\chi(t) = \begin{bmatrix}
          \zeta(t) \\
          \xi(t)
         \end{bmatrix},
\]
where \(\zeta(t)\) represents the system state error and \(\xi(t)\) encapsulates additional dynamics or states relevant to the system analysis. The dynamics of \(\chi(t)\), omitting the dependency on \(s\) for brevity, are governed by the following equations:
          \begin{equation}\label{extendedsystem}
  \Sigma_{ex}\left\{\begin{array}{ll}\dot{\chi}(t)&=\mathcal{A}\chi(t)+\mathcal{B}_1
  q(t)+\mathcal{B}_2 \eta(t)\\
   r(t)&=\mathcal{C}_1\chi(t)+\mathcal{D}_{11}q(t)+ \mathcal{D}_{12} \eta(t)\\
z(t)&=\mathcal{C}_2\chi(t)+\mathcal{D}_{21}q(t)+ \mathcal{D}_{22} \eta (t)\\
   
         \eta(t)&=\begin{bmatrix}
          d(t)& \hat{x}(t)
         \end{bmatrix}^T, ~\chi(0)=0
\end{array}\right.
\end{equation}
where the matrices of the Linear Parameter Varying (LPV) extended system  (\ref{extendedsystem}) have the appropriate dimensions.
  \begin{equation}\begin{array}{l}
    \mathcal{A}(s)=  \begin{bmatrix}A + B\Lambda(s)&0&0\\ B_{\delta 2}\Big(C + D\Lambda(s)\Big) & A_{\delta}&0\\
   0&0&  A_{\epsilon}\end{bmatrix}
    \\
    \mathcal{B}_1(s)= \begin{bmatrix}\tilde{B}&B\\ B_{\delta 2}\tilde{D} & B_{{\delta}2}D\\ D_{{\delta}2}\tilde{D} & D_{{\delta}2}D\end{bmatrix}, ~\mathcal{B}_2=\begin{bmatrix}\begin{bmatrix}
B\frac{\partial \alpha}{\partial d}(0,0) & B\frac{\partial \alpha}{\partial \hat{x}}(0,0)
\end{bmatrix}\\ \begin{bmatrix}
B_{{\delta}2}D\frac{\partial \alpha}{\partial d}(0,0) + B_{{\delta}2}D_r & B_{{\delta}2}D\frac{\partial \alpha}{\partial \hat{x}}(0,0) + B_{{\delta}2}C_r
\end{bmatrix}\\B_{\epsilon2} \end{bmatrix} \\
    \mathcal{C}_1(s)=  \begin{bmatrix}D_{{\delta}2}\Big(C + D\Lambda(s)\Big)\\  C_{{\delta}}\\C_{\epsilon}\end{bmatrix}, ~ \mathcal{C}_2(s)=  \begin{bmatrix}C + D\Lambda(s) \\ 0\\ 0\end{bmatrix} \\ \mathcal{D}_{11}= \begin{bmatrix}
D_{{\delta}2}\tilde{D}+D_{{\delta}1} & D_{{\delta}2}D\\
0& D_{\epsilon1}
\end{bmatrix}, ~\mathcal{D}_{12}=\begin{bmatrix}\begin{bmatrix}
D_{{\delta}2}D\frac{\partial \alpha}{\partial d}(0,0) + D_{{\delta}2}D_r & D_{{\delta}2}D\frac{\partial \alpha}{\partial \hat{x}}(0,0) + D_{{\delta}2}C_r 
\end{bmatrix}\\  D_{\epsilon2}\end{bmatrix} \\~\mathcal{D}_{21}=\begin{bmatrix}
\tilde{D} & D
\end{bmatrix},~\mathcal{D}_{22}(s)= \begin{bmatrix}
D\frac{\partial \alpha}{\partial d}(0,0) & D\frac{\partial \alpha}{\partial \hat{x}}(0,0)
\end{bmatrix}
    \end{array}
\end{equation}
This extended system is derived from the dynamics of the error system as detailed in the Appendix \ref{appendix}, the approximation error \(\mu(t)\) expression given in (\ref{erroriqccont}), and the extended virtual filter as described in (\ref{filter}). According to \cite{ref30}, the resulting system, delineated in (\ref{extendedsystem}), is depicted in Figure \ref{eqiv}. This figure illustrates a system that accepts the state of the uncertain system \(\eta(t)\) as input and produces the error output \(z(t)\), incorporating uncertainties \(\Delta_{\delta}\) and \(\Delta_\epsilon\), as shown in Figure \ref{eqiv}.

On the other hand, a dissipation inequality can be formulated to upper bound the worst-case \(\mathcal{L}_2([0, \infty))\) gain of \(\Tilde{F}(\Tilde{P},\Delta_{\delta},\Delta_\epsilon)\) as represented in Figure \ref{eqiv}. This is achieved by utilising the extended system described in (\ref{extendedsystem}) and the time-domain Integral Quadratic Constraints (IQCs) specified in (\ref{iqctime3}).
\begin{figure}[thpb]
      \centering
   \includegraphics[scale=0.1]{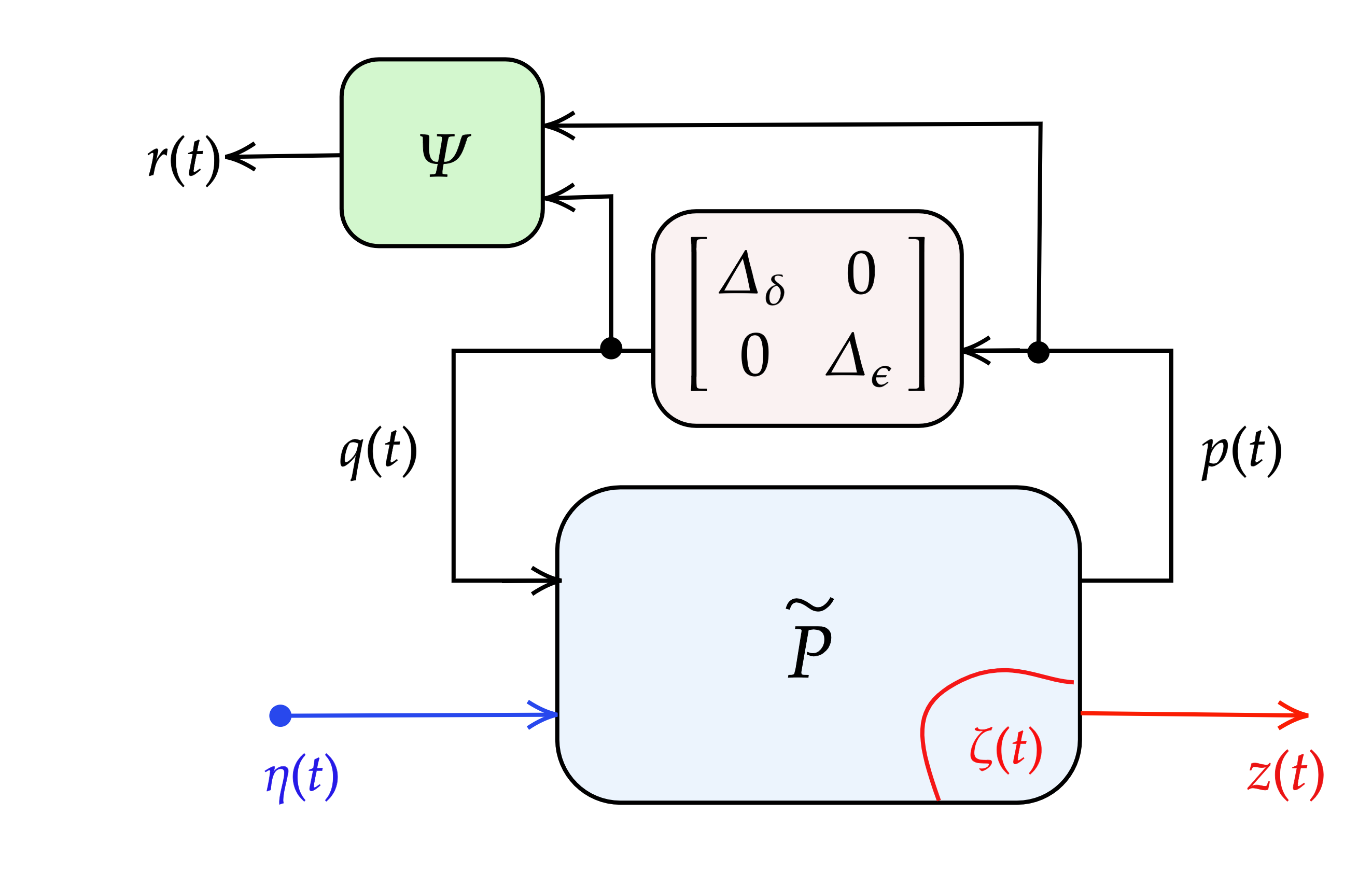}
\caption{ Graphical Interpretation of the IQCs with the Equivalent System of (\ref{extendedsystem}). }
\label{eqiv}
   \end{figure} 
As the aim in this work is to estimate the Integral Square Error (ISE) and worst expected Supreme Square Error (SSE), it is worth to define those metrics as following:
   \begin{deff}[\color{blue} RISE \& SSE]\label{difer}
       Let us consider $y$ and $\hat{y}$ are the outputs of the closed-loop interconnected-based NN system and the closed-loop reference model, respectively. Then  the Relative Integral Square Error (RISE) and the Supreme Square Error (SSE) are given by :
       \begin{equation}
           RISE\triangleq \dfrac{||y-\hat{y}||_2}{\sqrt{||d||_2^2  + ||\hat{x}||_2^2}}, ~SSE\triangleq \dfrac{||y-\hat{y}||_\infty}{\sqrt{||d||_2^2  + ||\hat{x}||_2^2}}
       \end{equation}
       where the worst case of the Relative Integral Square Error $\gamma$ and the worst case of the Supreme Square Error $\sigma$ satisfy:
       \begin{equation}
        \dfrac{||y-\hat{y}||_2}{\sqrt{||d||_2^2  + ||\hat{x}||_2^2}}\leq \gamma,~ ~\dfrac{||y-\hat{y}||_\infty}{\sqrt{||d||_2^2  + ||\hat{x}||_2^2}}\leq \sigma
       \end{equation}
   \end{deff}
\subsection{\color{blue} Relative Integral Square Error (RISE) }
The following theorem delineates the conditions necessary to evaluate the worst-case performance of the Relative Integral of the Square of the Error (RISE) metric. In control theory, the RISE metric is instrumental in assessing the efficacy of a system's response \cite{oxford}. Specifically, RISE offers critical insights into the comparative behaviour of an uncertain closed-loop system, equipped with a neural network (NN) controller, against a predefined closed-loop reference model. This comparison is pivotal for understanding the resilience and robustness of the NN-controlled system in the face of uncertainties.
\newtheorem{theo}{Theorem}
\begin{theo}[Worst-case RISE Performance]
Let \(\Delta_{\delta}\) and \(\Delta_\epsilon\) satisfy the Integral Quadratic Constraint (IQC) given by \(IQC(\Xi, M)\), and assume that the extended system \(\Sigma_{ex}\) as defined in (\ref{extendedsystem}) is well-posed. Suppose there exists a scalar \(\lambda > 0\) and a continuously differentiable matrix \(P = P^T\) with appropriate dimensions such that \(P \geq 0\), and for all \((s, \dot{s}) \in \Omega \times \dot{\mathcal{S}}\) (omitting \(s\) and \(\dot{s}\) for brevity):
\begin{equation}\label{h}
\begin{bmatrix}
P\mathcal{A} + \mathcal{A}^T P + \partial P & P\mathcal{B}_1 & P\mathcal{B}_2 \\
\mathcal{B}_1^T P & 0 & 0 \\
\mathcal{B}_2^T P & 0 & -I
\end{bmatrix} + \frac{1}{\gamma^2}
\begin{bmatrix} \mathcal{C}_2^T \\ \mathcal{D}_{21}^T \\ \mathcal{D}_{22}^T \end{bmatrix}
\begin{bmatrix} \mathcal{C}_2 & \mathcal{D}_{21} & \mathcal{D}_{22} \end{bmatrix} + \lambda
\begin{bmatrix} \mathcal{C}_1^T \\ \mathcal{D}_{11}^T \\ \mathcal{D}_{12}^T \end{bmatrix}
M
\begin{bmatrix} \mathcal{C}_1 & \mathcal{D}_{11} & \mathcal{D}_{12} \end{bmatrix} < 0
\end{equation}
Then, the Relative Integral Square Error (RISE) satisfies:
\begin{equation}
\sup_{s \in \Omega, 0 \neq \eta \in \mathcal{L}_2, \zeta(0) = 0} \left[ \frac{||y - \hat{y}||_2}{\sqrt{||d||_2^2 + ||\hat{x}||_2^2}} \right] \leq \gamma,
\end{equation}
\end{theo}
\begin{proof}
Consider the various signals \((\chi, z, p, \eta, r)\) of the error dynamics system, with the input \(\eta \in \mathcal{L}_2([0, \infty))\) and the solution trajectory \(\chi\) subject to zero initial conditions \(\chi(0) = 0\). Assuming the extended system, as referred to in (\ref{extendedsystem}), is well-posed implies that all signals within this system are well-defined. By assumption, the uncertainties \(\Delta_{\delta}\) and \(\Delta_\epsilon\) are constrained by the Integral Quadratic Constraint (IQC) defined by \((\Xi, M)\), and consequently, the quantity \(r\) must satisfy the time-domain IQC as given in (\ref{iqctime3}) for any \(T > 0\). Furthermore, utilizing the state-space representation of the virtual filter detailed in (\ref{iqcfilter}), the expression for \(r^T M r\) is formulated as follows:
\begin{equation}\label{str}
    rMr^T= \begin{bmatrix}\underline{\xi}\\p\\q \end{bmatrix}^T\begin{bmatrix}C_\xi^T M C_\xi& C_\xi^T M D_\xi\\
  D_\xi^T M C_\xi& D_\xi^T M D_\xi  
    \end{bmatrix} \begin{bmatrix}\underline{\xi}\\p\\q \end{bmatrix}
\end{equation}
The foundation of the following proof lies in defining a parameter-dependent storage function, also known as a Lyapunov function, \(V: \mathbb{R}^\chi \times \Omega \rightarrow \mathbb{R}_+\), given by \(V(\chi,s) = \chi^T P(s) \chi\). The strict inequality presented in (\ref{h}) suggests that there exists a very small quantity \(\varepsilon > 0\) such that the following perturbed matrix inequality is satisfied for all \(s \in \Omega\):
\begin{eqnarray}\label{L2}\begin{array}{l}
\begin{bmatrix}
          P\mathcal{A}^T+\mathcal{A}P^T+\partial P& P\mathcal{B}_1  &P\mathcal{B}_2 \\
           \mathcal{B}_1^T P &0&0\\
           \mathcal{B}_2^T P&0&-(1-\varepsilon)I
         \end{bmatrix}+\dfrac{1}{\gamma^2}\begin{bmatrix} \mathcal{C}_2^T\\\mathcal{D}_{21}^T\\\mathcal{D}_{22}^T\end{bmatrix}\begin{bmatrix} \mathcal{C}_2&\mathcal{D}_{21}&\mathcal{D}_{22}\end{bmatrix}+ \lambda\begin{bmatrix} \mathcal{C}_1^T\\\mathcal{D}_{11}^T\\\mathcal{D}_{12}^T\end{bmatrix}M\begin{bmatrix} \mathcal{C}_1&\mathcal{D}_{11}&\mathcal{D}_{12}\end{bmatrix}\leq0
         \end{array}
\end{eqnarray}
Taking into account the structural condition given in (\ref{str}) and multiplying both sides of equation (\ref{L2}) from the left and right by \([\chi^T, q^T, \eta^T]\) and its transpose \([\chi^T, q^T, \eta^T]^T\) respectively, we observe that the Lyapunov function \(V\) satisfies the following dissipation inequality:
\begin{eqnarray}\label{diss}
   \nabla_{{\chi}} V \dot{\chi}+\nabla_{{s}} V\dot{s}\leq-\dfrac{1}{\gamma^2} z^Tz+(1-\varepsilon)\eta^T\eta-\lambda rMr^T
\end{eqnarray}
This analysis demonstrates that the dissipation inequality presented in (\ref{diss}) corresponds directly to the perturbed linear matrix inequality delineated in (\ref{h}). By integrating the dissipation inequality obtained from (\ref{diss}) over the interval from the lower bound time \(t = 0\) to the upper bound time \(t = T\), and considering the initial condition \(\chi(0) = 0\) i.e $V(0)=0$, we obtain:
\begin{eqnarray}\label{diss2}
    \underbrace{V(\chi(T), s(T))-V(0,0)}_{=V(\chi(T), s(T))\geq0~ \because ~V(0,0)=0   }+\lambda  \int\limits_{0}^{T}\ \ r(\tau)^T \mathcal{M} r(\tau)\,\mathrm{\textit{d} \tau}\leq (1-\varepsilon)\int\limits_{0}^{T}\ \ \eta(\tau)\eta(\tau)^T\,\mathrm{\textit{d} \tau}-\dfrac{1}{\gamma^2}\int\limits_{0}^{T}\ \ z(\tau)z(\tau)^T\,\mathrm{\textit{d} \tau}
\end{eqnarray}
It follows from the positivity condition  of IQCs as in (\ref{iqctime3}), $\lambda \geq 0$, and also due to the non-negativity of the storage function $V$ leads to:
\begin{equation}
    \dfrac{1}{\gamma^2}\ \int\limits_{0}^{T}\ \ z(\tau)z(\tau)^T\,\mathrm{\textit{d} \tau} \leq (1-\varepsilon)\int\limits_{0}^{T}\ \ \eta(\tau)\eta(\tau)^T\,\mathrm{\textit{d} \tau}
\end{equation}
As we assume that $\varepsilon>0$ , thus
\begin{eqnarray}
    \lim_{T\rightarrow +\infty}\left[ \dfrac{1}{\gamma^2}\ \int\limits_{0}^{T}\ \ z(\tau)z(\tau)^T\,\mathrm{\textit{d} \tau} < \int\limits_{0}^{T}\ \ \eta(\tau)\eta(\tau)^T\,\mathrm{\textit{d} \tau}\right]
\end{eqnarray}
Knowing that 
\begin{equation}\begin{array}{l}
\left|\left|z(t)\right|\right|_2=||y(t)-\hat{y}(t)||_2 \,~~ \left|\left|\eta(t)\right|\right|_2=  \sqrt{||d(t)||_2^2  + ||\hat{x}(t)||_2^2}
    \end{array}
\end{equation}
this allows us to  deduce  that:
 \begin{equation}\label{eq38}
\sup_{s \in \Omega.}\sup_{0\neq y, \hat{y}\in \mathcal{L}_2, \zeta(0)=0}  \dfrac{||y-\hat{y}||_2}{\sqrt{||d||_2^2  + ||\hat{x}||_2^2}}\leq\gamma
 \end{equation}
 \end{proof}

The value \(\gamma\) specified in Theorem \ref{theo1} represents the maximal \(\mathcal{L}_2\) gain across all permissible parameter trajectories. Specifically, it defines the upper bound on the worst-case Relative Integral of the Square of the Error (RISE) between the closed-loop reference model and the NN controlled system. This worst-case scenario accounts for the NN controller parameters \(\Lambda\) and $\alpha$ and their associated training error \(\epsilon\), providing crucial insights into the system's behaviour under the influence of uncertainties \(\Delta_{\delta}\) and \(\Delta_{\epsilon}\). 

The result of Theorem \ref{theo1} is particularly significant, as it implies that, by relying solely on the known quantities \(d(t)\) and \(\hat{x}(t)\), we can predict the behaviour of the uncertain system controlled by the NN. More specifically, it allows us to determine the expected regions and boundaries within which the system’s output will remain, despite uncertainties and nonlinearities.

This insight is particularly crucial in control applications, as it facilitates the assessment of the robustness and reliability of the neural network controller in maintaining stability and performance under varying conditions. By characterising these boundaries, we gain a deeper understanding of the extent to which the trained neural controller approximates the desired closed-loop reference model. Moreover, this formulation provides a foundation for further optimisation, enabling the refinement of the neural controller to enhance its adaptability and accuracy in real-world applications. The worst-case result can be leveraged to improve the robustness of the NN controller by minimising the dynamical error, thereby guaranteeing better performance when implemented. Additionally, this worst-case measure serves as an indicator of the likelihood of undesirable behaviour emerging within the closed-loop system when employing the designed NN controller.

Such insights are invaluable for enhancing the robustness of NN controllers by reducing the dynamical error, ultimately ensuring superior performance upon implementation. Furthermore, this worst-case performance measure serves as a predictive tool for assessing the potential occurrence of adverse behaviours within the closed-loop system equipped with the NN controller. This predictive capability is instrumental in preemptively addressing and mitigating risks associated with the deployment of NN-based control strategies, highlighting the theorem’s practical significance in the design and optimisation of robust NN controllers.

\subsection{\color{blue}Supreme Square Error (SSE) }
The Linear Parameter Varying (LPV) model approximation, employed in the robust tracking problem elaboration, invites further scrutiny under more conservative conditions. Specifically, the robust performance analysis of the LPV model for system (\ref{extendedsystem}) can be broadened to encompass performance metrics less conservative than the Relative Integral Square Error (RISE). The forthcoming theorem aims to establish the conditions for the worst-case Supreme Square Error (SSE), denoted by \(\sigma\), under conditions that are less conservative, such as \(||y-\hat{y}||_2 \geq ||y-\hat{y}||_\infty\). More granular than RISE, SSE offers profound insights into the dynamical behavior of the interconnected system featuring a neural networks (NNs) controller in juxtaposition with the closed-loop reference model. As delineated in Definition \ref{difer}, SSE serves as a robust estimator for the maximal deviation between the outputs of the two systems. Identifying the bounds of SSE enhances the confidence level for users considering the adoption of these control schemes.
\begin{theo}[Worst-case SSE]\label{theo2}
Let \(\Delta_{\delta}\) and \(\Delta_\epsilon\) satisfy the Integral Quadratic Constraint \(IQC(\Xi, M)\) and assume that the extended system \(\Sigma_{ex}\) as defined in (\ref{extendedsystem}) is well-posed. Suppose there exists a scalar \(\lambda > 0\) and a continuously differentiable matrix \(P = P^T\) with appropriate dimensions such that \(P \geq 0\), and for all \((s, \dot{s}) \in \Omega \times \dot{\mathcal{S}}\) (omitting \(s\) and \(\dot{s}\) for brevity):
\begin{equation}\label{hh}
\begin{bmatrix}
P\mathcal{A} + \mathcal{A}^T P + \partial P & P\mathcal{B}_1 & P\mathcal{B}_2 \\
\mathcal{B}_1^T P & 0 & 0 \\
\mathcal{B}_2^T P & 0 & -I
\end{bmatrix} + \lambda
\begin{bmatrix} \mathcal{C}_1^T \\ \mathcal{D}_{11}^T \\ \mathcal{D}_{12}^T \end{bmatrix}
M
\begin{bmatrix} \mathcal{C}_1 & \mathcal{D}_{11} & \mathcal{D}_{12} \end{bmatrix} < 0,
\end{equation}
and
\begin{equation}\label{ach}
\begin{bmatrix} P & \mathcal{C}_2^T \\ \mathcal{C}_2 & \sigma^2 I \end{bmatrix} > 0.
\end{equation}
Then, the Supreme Square Error (SSE) satisfies:
\begin{equation}
\sup_{s \in \Omega, 0 \neq \eta \in \mathcal{L}_2, \zeta(0) = 0} \left[ \frac{||y - \hat{y}||_\infty}{\sqrt{||d||_2^2 + ||\hat{x}||_2^2}} \right] < \sigma.
\end{equation}
\end{theo}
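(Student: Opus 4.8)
The plan is to reproduce the dissipation-inequality argument from the proof of the RISE theorem, but to replace the $\mathcal{L}_2\to\mathcal{L}_2$ estimate on the error output $z$ by a peak ($\mathcal{L}_2\to\mathcal{L}_\infty$) estimate read off from the extra condition (\ref{ach}). First I would take the parameter-dependent storage function $V(\chi,s)=\chi^T P(s)\chi$, which is nonnegative since $P\geq 0$. As (\ref{hh}) is a strict matrix inequality, there is a small $\varepsilon>0$ such that the perturbed version of (\ref{hh}) with $-(1-\varepsilon)I$ in place of the lower-right block $-I$ still holds for all $(s,\dot s)\in\Omega\times\dot{\mathcal{S}}$. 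Multiplying this perturbed inequality on the left by $[\chi^T\ q^T\ \eta^T]$ and on the right by its transpose, and inserting the factorized form of $r^T M r$ exactly as in (\ref{str}), yields the dissipation inequality
\begin{equation*}
\nabla_{\chi}V\,\dot{\chi}+\nabla_{s}V\,\dot{s}\leq (1-\varepsilon)\,\eta^T\eta-\lambda\,r^T M r .
\end{equation*}

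Next I would integrate from $t=0$ to $t=T$ using the zero initial condition $\chi(0)=0$ (hence $V(\chi(0))=0$), which gives
\begin{equation*}
V(\chi(T))+\lambda\int_{0}^{T} r(\tau)^T M\,r(\tau)\,d\tau\leq (1-\varepsilon)\int_{0}^{T}\eta(\tau)^T\eta(\tau)\,d\tau .
\end{equation*}
Because $\Delta$ and $\epsilon$ satisfy the hard IQC, $\int_0^T r^T M r\,d\tau\geq 0$ for all $T$; together with $\lambda\geq 0$ and $V\geq 0$ this gives $V(\chi(T))\leq (1-\varepsilon)\|\eta\|_2^2$ for every $T\geq 0$, i.e.\ $\chi(t)^T P(s)\,\chi(t)\leq (1-\varepsilon)\|\eta\|_2^2$ for all $t$ and all admissible parameter trajectories $s\in\Omega$.

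Then I would use (\ref{ach}): the Schur complement with respect to the positive-definite block $\sigma^2 I$ shows that (\ref{ach}) is equivalent to $\sigma^2 P-\mathcal{C}_2^T\mathcal{C}_2>0$, that is $\mathcal{C}_2^T\mathcal{C}_2<\sigma^2 P$. Hence, for every $t$,
\begin{equation*}
z(t)^T z(t)=\chi(t)^T\mathcal{C}_2^T\mathcal{C}_2\,\chi(t)\leq\sigma^2 V(\chi(t))\leq\sigma^2(1-\varepsilon)\|\eta\|_2^2 .
\end{equation*}
Taking the supremum over $t\in[0,\infty)$ gives $\|z\|_\infty^2<\sigma^2\|\eta\|_2^2$; substituting $\|z\|_\infty=\|y-\hat y\|_\infty$ and $\|\eta\|_2=\sqrt{\|y\|_2^2+\|\hat y\|_2^2}$ and taking the supremum over $s\in\Omega$ and over $0\neq\eta\in\mathcal{L}_2$ with $\zeta(0)=0$ yields the claimed bound $\sigma$.

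I expect the delicate step to be the last one, where the peak bound on $z$ is extracted from (\ref{ach}). The error output is in general $z=\mathcal{C}_2\chi+\mathcal{D}_{21}q+\mathcal{D}_{22}\eta$, whereas (\ref{ach}) controls only the state contribution $\mathcal{C}_2\chi$; to conclude $z(t)^Tz(t)=\chi(t)^T\mathcal{C}_2^T\mathcal{C}_2\chi(t)$ one needs the feedthrough terms in $z$ to drop out — here $\mathcal{D}_{22}=0$, and in the standard design with $D=0$ and $\tilde D=0$ also $\mathcal{D}_{21}=0$ — or else a strengthened version of (\ref{ach}) absorbing the $\mathcal{D}_{21}q$ term. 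The remaining ingredients — producing $\varepsilon>0$ from the strictness of (\ref{hh}), the congruence multiplication turning the LMI into the dissipation inequality, the integration with $\chi(0)=0$, and verifying that all signals are well defined and $\chi$ absolutely continuous — are routine given the well-posedness hypothesis on $\Sigma_{ex}$.
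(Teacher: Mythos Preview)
Your proposal is correct and follows essentially the same argument as the paper: the same storage function $V(\chi,s)=\chi^TP(s)\chi$, the same $\varepsilon$-perturbation of (\ref{hh}), the same congruence with $[\chi^T\ q^T\ \eta^T]$ to obtain the dissipation inequality, integration with $\chi(0)=0$, the hard-IQC sign condition to bound $V(\chi(T))$, and then the Schur complement of (\ref{ach}) to pass from the stored energy to $z(t)^Tz(t)$. Your identification of the feedthrough issue---that (\ref{ach}) controls only $\mathcal{C}_2\chi$ while $z=\mathcal{C}_2\chi+\mathcal{D}_{21}q+\mathcal{D}_{22}\eta$---is in fact more careful than the paper, which writes $\tfrac{1}{\sigma^2}z^Tz\leq\chi^TP\chi$ directly and implicitly relies on $\mathcal{D}_{21}=\mathcal{D}_{22}=0$.
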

\begin{proof}
Considering the signals \((\chi, z, \eta, q, r)\) generated by the extended system as outlined in \ref{extendedsystem}, where the input \(\eta\) belongs to \(\mathcal{L}_2([0, \infty))\) and the parameter trajectory \(s\) resides within \(\Omega\), all under the assumption of zero initial conditions. The well-posedness of the extended Linear Parameter Varying (LPV) system ensures that all signals \((\chi, z, \eta, q, r)\) are well-defined. Furthermore, it is assumed that the uncertainties \(\Delta_{\delta}\) and \(\epsilon\) adhere to the Integral Quadratic Constraints (IQCs) specified by \((\Xi, M)\), necessitating that the signal \(r\) satisfies the time-domain expression of the IQCs in (\ref{iqctime2}) for any \(T > 0\).

A storage function \(V : \mathbb{R}^{n_{\chi}} \times \Omega \rightarrow \mathbb{R}_+\) is defined by \(V(\chi, s) = \chi^T P(s) \chi\). Additionally, the state-space representation of the virtual filter described in (\ref{iqcfilter}) leads to (\ref{str}). 

Evaluating the perturbed Linear Matrix Inequality (LMI) in (\ref{hh}) at \((s(t), \dot{s}(t))\) and performing multiplication from the left and right by \([\chi^T, q^T, \eta^T]\) and its transpose \([\chi^T, q^T, \eta^T]^T\), respectively, results in the following dissipation inequality:
\begin{equation}\label{eq88}
    \nabla_{x_{\chi}}V \dot{\chi}+ \nabla_{x_{s}}V \dot{s}\leq (1-\varepsilon) \eta \eta^T-\lambda rMr^T
\end{equation}
This implies that the dissipation inequality presented in (\ref{eq88}) is directly equivalent to the perturbed linear matrix inequality (LMI) delineated in (\ref{hh}). Proceeding with this understanding, integrating the dissipation inequality obtained in (\ref{eq88}) along the state/parameter trajectory from the initial time \(t = 0\) to an arbitrary upper bound \(t = T\) and taking into account that the initial condition for \(\chi\) is \(\chi(0) = 0\), we arrive at the following conclusion:
\begin{equation}\begin{array}{ll}
    V (\chi(T),s(T))\leq(1-\varepsilon)\int\limits_{0}^{T}\ \ \eta(\tau)\eta(\tau)^T\,\mathrm{\textit{d} \tau}-\lambda\int\limits_{0}^{T}\ \ r(\tau)Mr(\tau)^T\,\mathrm{\textit{d} \tau}
    \end{array}
\end{equation}
It follows from the IQCs condition (\ref{iqctime2}), that
\begin{equation}\label{eq90}
    V (\chi(T),s(T))\leq (1-\varepsilon)\int\limits_{0}^{T}\ \ \eta(\tau)\eta(\tau)^T\,\mathrm{\textit{d} \tau}
\end{equation}
Now, applying Schur complement on (\ref{ach}) and  multiplying left and right by $\chi, ~z$ and its transpose $\chi^T$, respectively, results:
\begin{equation}\label{eq91}
    \dfrac{1}{\sigma^2}z(t)z^T(t)\leq \chi^T(t)P(s)\chi(t)~~\forall t>0
\end{equation}
Evaluating (\ref{eq91}) at $t = T$, and applying (\ref{eq90})  yields:
\begin{equation}
    \dfrac{1}{\sigma^2}z(T)z^T(T)\leq  (1-\varepsilon)\int\limits_{0}^{T}\ \ \eta(\tau)\eta(\tau)^T\,\mathrm{\textit{d} \tau}
\end{equation}
Knowing the $\varepsilon>0$,  therefore
\begin{equation}
    \dfrac{1}{\sigma^2}z(T)z^T(T)\leq \int\limits_{0}^{T}\ \ \eta(\tau)\eta(\tau)^T\,\mathrm{\textit{d} \tau}
\end{equation}
 Considering the supremum over $T$ to demonstrate  that $||z||_{\infty}\leq \sigma ||\eta||_2$. Knowing that this result holds for any given input $\eta\in \mathcal{L}_2([0, \infty))$,
admissible trajectory solution $s\in \Omega$, and uncertainty $\Delta_{\delta}, \epsilon \in IQC(\Xi, M)$. Thus, 
 \begin{equation}
\sup_{s \in \Omega.}\sup_{0\neq \eta\in \mathcal{L}_2, \zeta(0)=0}  \dfrac{\left|\left|z(t)\right|\right|_\infty}{\left|\left|\eta(t)\right|\right|_2}<\sigma
 \end{equation}
Having $   \left|\left|z(t)\right|\right|_\infty=||y(t)-\hat{y}(t)||_\infty $ and $ \left|\left|\eta(t)\right|\right|_2=  \sqrt{||d(t)||_2^2  + ||\hat{x}(t)||_2^2}$, this implies
\begin{equation}
  \dfrac{\left|\left|z(t)\right|\right|_\infty}{\left|\left|\eta(t)\right|\right|_2}=  \dfrac{||y-\hat{y}||_\infty}{\sqrt{||d||_2^2  + ||\hat{x}||_2^2}}
\end{equation}
\end{proof}
Theorem \ref{theo2} holds significant importance as it establishes the maximum expected deviation of the closed-loop system, controlled by a neural network (NN) controller, from the closed-loop reference model. This crucial insight provides a rigorous foundation for evaluating the robustness and reliability of NN controllers, instilling confidence in their deployment by demonstrating their capability to minimise dynamical error and ensure optimal system performance. 

Moreover, the worst-case Supreme Square Error (SSE) serves as a key metric for assessing the potential emergence of undesirable behaviours within the closed-loop system under the influence of the designed NN controller. By quantifying this upper bound, Theorem \ref{theo2} plays a fundamental role in forecasting system reliability and performance, offering a predictive framework for the informed and secure implementation of NN-based control strategies in uncertain environments.

\section{ Validation and Numerical Simulation }\label{sec4}
In order to validate the theoretical results we have proved in the previous section, we will conduct tests using two different problems. Our first problem involves controlling a robot arm with only a single output, called the Single-Link Robot Arm Control Problem. A simpler and more focused scenario will allow us to evaluate the effectiveness of our approach. Further, we will also test our approach on the more complex problem of Deep Guidance and Control of Apollo Lander, which involves multiple outputs. During the descent and landing phase of the Apollo Lander, precise control and guidance is required. We can further assess the robustness and adaptability of our approach by examining this more complex scenario with multiple outputs.

Moreover, the LPV system (\ref{extendedsystem}) where its  matrices of state space depend on a time-varying parameter vector $s(t): \mathbb{R}_+\rightarrow \mathcal{S} $ and $\mathcal{S} \subset \mathbb{R}^n$. $s(t)=[d(t)^T~c(t)^T]$ a continuously differentiable function of time where the parameter rates of variation where Let \(c\) the trajectory lies on the curve \(\vartheta (\nu) = x + \nu(\hat{x}- x)\) for some \(\nu \in [0, 1]\). $\dot{s}(t): \mathbb{R}_+\rightarrow \dot{\mathcal{S}}$. $\dot{\mathcal{S}}$ is a hyperrectangle given by: 
\begin{equation}\label{dots}
    \dot{\mathcal{S}}=\left\{ \rho \in \mathbb{R}^n~| ~
    \underline{\rho}_i \leq \rho_i \leq \overline{\rho}_i,~i=1\cdots n\right\}
\end{equation}
because of the property of the activation function  of the controller in quantity $\Lambda(s)$ allows us to express the set of a bounded convex domain
\(\mathcal{G}_{m}^{n}\) of the matrix $\Lambda(s)$ as follows:
\(\mathcal{G}_{m}^{n} = \left\{ a_{\text{ij}} \leq \frac{\partial\pi_{i}}{\partial y_{j}} \leq b_{\text{ij}},\ i = 1,\cdots,n,\ j = 1,\cdots,m \right\}\)
and the set of vertices is defined by
\begin{equation} \label{ver}
    \mathcal{V}_{\mathcal{G}_{m}^{n}} = \left\{ \mathcal{V =}\left\lbrack \mathcal{V}_{11},\cdots,\mathcal{V}_{1m},\cdots,\mathcal{V}_{\text{nm}} \right\rbrack:\mathcal{V}_{\text{ij}} \in \left\{ a_{\text{ij}},\ b_{\text{ij}} \right\} \right\}
\end{equation}

In this study, we utilise the IQClab Toolbox\footnote{\nolinkurl{www.iqclab.eu}}, a MATLAB-based extension of the Robust Control Toolbox , designed for robustness analysis and control design in uncertain and linear parameter-varying (LPV) systems. As described by \cite{iqclab} and compared with another toolbox such as  IQC-$\beta$ \cite{iqcb}, IQClab provides a versatile framework incorporating integral quadratic constraint (IQC) methodologies, enabling efficient model reduction, control switching schemes, and performance weighting function generation. The modular architecture of the toolbox facilitates seamless integration with various solvers and parsers, making it a powerful tool for developing and implementing new linear matrix inequality (LMI) and IQC-based algorithms. Given its adaptability and ease of use, IQClab plays a crucial role in our analysis, offering a robust computational environment for evaluating system stability and performance.
\subsection{\color{blue} Single-Link Robot Arm Control Problem}
\subsubsection{\color{blue} Step 1: Problem Setting} The simulation validation that we conduct in this section uses the worst-case of RISE metric resulting from Theorem 1 since the open source tool we adopt for simulation calculations can only provide $\mathcal{L}_2$ to $\mathcal{L}_2$ gain.

Consider the nonlinear single-link robot arm example, shown in Fig. \ref{coco}, with parameters $ m = 0.15 kg$, length $l = 0.5 m,$ and friction coefficient
$\mu = 0.5 Nms/rad$. The state space representation  of the arm's motion is given by:
\begin{equation}\label{arm}
    \left\{\begin{array}{ll}\dot{\omega}(t)&= -10\sin(\theta(t))-2\omega(t)+\tau(t)\\
    \dot{\theta}(t)&=\omega(t)
\end{array}\right.
\end{equation}
\begin{figure}[thpb]
      \centering
   \includegraphics[scale=0.079]{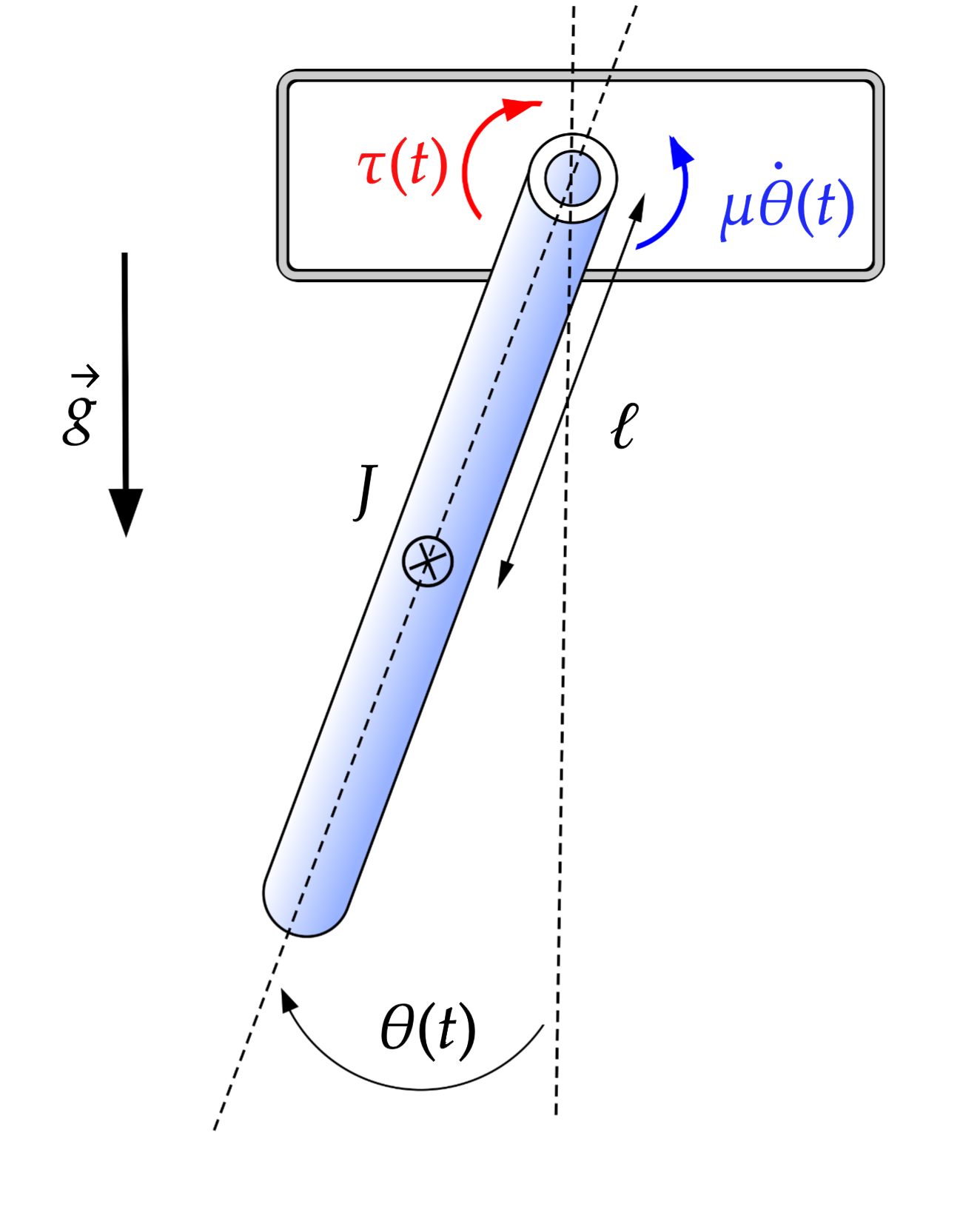}
\caption{ Single-link robot arm. }
     \label{coco}
   \end{figure}
 \subsubsection{\color{blue} Step 2: Linear Reference Model}
We begin by training a neural controller with the objective of ensuring that the arm, when governed by this controller, dynamically replicates the behavior of the closed-loop reference model defined as:
\begin{equation}\label{reference}
    \left\{\begin{array}{ll}\dot{\omega}_r(t)&= -9\theta_r(t)-6\omega_r(t)+9 r(t)\\
    \dot{\theta}_r(t)&=\omega_r(t) 
\end{array}\right.
\end{equation}
According to (\ref{arm}), achieving the desired behaviour requires an ideal classical controller that ensures the system's dynamic response aligns with the prescribed closed-loop reference model.
\begin{equation}\label{contr1}
\pi^*\left(r(t),  \theta_r(t)\right)  =   10 \theta_r(t)+4\omega_r(t)-9r(t)-10 \sin(\theta_r(t))
\end{equation}
According to Equation (\ref{arm}), achieving the desired behaviour requires an ideal classical controller that ensures the system's dynamic response aligns with the prescribed closed-loop reference model:

\begin{equation}\label{contr1}  
\pi^*\left(r(t),  \theta_r(t)\right)  =   10 \theta_r(t) + 4\omega_r(t) - 9r(t) - 10 \sin(\theta_r(t))  
\end{equation}  

where \(\tau_r(t) = \pi^*\left(r(t),  \theta_r(t)\right) \). This implies that the dynamic behaviour of the system (\ref{arm}) in the closed-loop configuration, following the training of the neural controller, approximates the desired closed-loop dynamics (\ref{reference}). The ideal scenario would be for the system governed by the neural network controller (\ref{arm}) to exactly match the closed-loop reference model (\ref{reference}) at all times \(t\). However, achieving this perfect alignment is practically infeasible due to inherent system nonlinearities and modelling limitations.

The objective of this example is to determine the worst-case relative squared difference, \(\gamma\), between the dynamic response of the robot arm under neural network control, accounting for its nonlinear characteristics, and the ideal closed-loop reference model.
\begin{equation}
    \sup_{s \in \Omega.}\sup_{0 \neq [\tau(t) ~\theta_r(t)]^T \in \mathcal{L}_{2},\zeta(0) = 0} \frac{\left| \left|\theta\left( t \right) - \theta_r(t)\right| \right|_2}{\left| \left|[r(t) ~\theta_r(t)]^T\right| \right|_2}  < \gamma
\end{equation}
To this end, let assume the static nonlinearity $ q(t)=\Delta_{\delta} (\theta(t)) = \theta(t)- \sin(\theta(t))$ which is slope-restricted and sector bounded. We rearrange   (\ref{arm}) then into the form: 
\begin{equation}\label{ara}
    \left\{\begin{array}{ll}\dot{\omega}(t)&= -9\theta(t)-2\omega(t)+\tau(t)+10q(t)\\
    \dot{\theta}(t)&=\omega(t),\\
    ~p(t)&=\theta(t),\\
    \Delta_{\delta} (\theta(t)) &=q(t)= \theta(t)a- \sin(\theta(t))
\end{array}\right.
\end{equation}
\subsubsection{\color{blue} Step 3: Error Dynamical System}
Now, we generate the dynamical error system using the closed-loop reference model (\ref{reference}) and the new mathematical form of the motion of the arm (\ref{ara}). Let $\delta \omega=\omega-\omega_r$,  $\delta\theta=\theta-\theta_r$ and $\delta \tau=\tau-\tau_r$  then
\begin{eqnarray}
\dfrac{d}{dt}\begin{bmatrix} \delta\omega\\ \delta\theta\end{bmatrix}=\begin{bmatrix} -6 & -9\\ 1 & 0\end{bmatrix}\begin{bmatrix} \delta\omega\\ \delta\theta\end{bmatrix}+\begin{bmatrix} 10\\ 0\end{bmatrix}q(t)+\begin{bmatrix} 1\\ 0\end{bmatrix}\delta\tau(t),~~\zeta(t)=\begin{bmatrix} \delta\omega\\ \delta\theta\end{bmatrix}
\end{eqnarray}
The neural controller architecture adopted in this example is given in Fig. \ref{nnc}. Thus, the output of the neural controller is given by :
\begin{equation}\label{contr2}
\pi\left(r(t), \theta(t)\right) = w_{4}\tanh\left(w_{1}\theta\left( t \right) + w_{2}r\left( t \right) + w_{3} \pi\left(r(t), \theta(t)\right) +  b_{1} \right) + b_{2}\end{equation}
where $\tau(t)=\pi\left(r(t), \theta(t)\right) $.
\begin{figure}[thpb]
      \centering
   \includegraphics[scale=0.15]{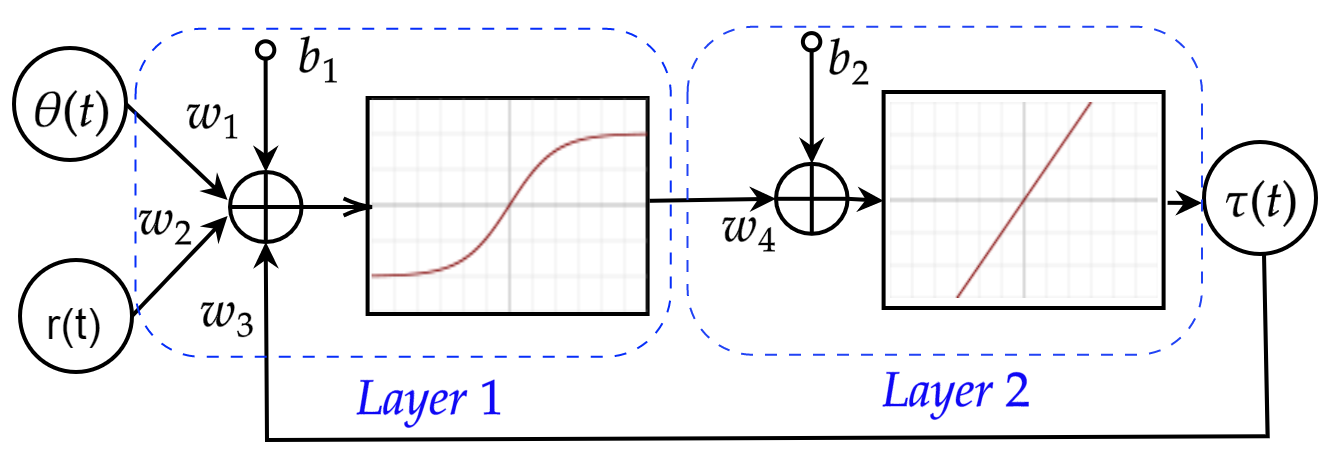}
\caption{ Neural controller $\pi^{*}$ architecture.  }
      \label{nnc}
   \end{figure}
Using, Differential Mean Value Theorem \cite{ref23} introduced in lemma \ref{theo1}, can convert the dynamical error system to an LPV system without nonlinearities of the neural controller. According to lemma \ref{lemma4}, the difference between the two controller $\Delta_{\delta}\tau(t)=\mu(t)$ given in  (\ref{erroriqc}) can be expressed as : 
 \begin{equation} \label{eq51}
\delta\tau(t)=\mu(t)=\Lambda(s) \zeta(t)+\frac{\partial \alpha}{\partial \theta_r}(0,0)\theta_r(t)+\frac{\partial \alpha}{\partial d }(0,0) d(t)+\epsilon (t),
 \end{equation}
where the trajecory $s(t)=[d(t), c(t)]^T$, the LPV matrix $ \Lambda(s(t))=\nabla_x \pi(d(t),c(t))$ is the Jacobian matrix 
 of the neural network, $\epsilon (t)=\mathcal{O}(d(t),\hat{x}(t))$ and  the training error  $\alpha(d(t),\hat{x}(t))=\pi(d(t), \hat{x}(t))-\pi^*(d(t), \hat{x}(t))$.
In this case, we have $ C=\begin{bmatrix} 0& 1\end{bmatrix}$, $\zeta(t) =\begin{bmatrix} \delta\omega(t)& \delta\theta(t)\end{bmatrix}^T$,  $\hat{y}(t)=\theta_r$ and $y(t)=\theta(t)$. Therefore,
\begin{equation} \label{eq53}
\delta\tau(t)=\Lambda(s)\delta\theta(t) +\frac{\partial \alpha}{\partial \theta_r}(0,0)\theta_r(t)+ \frac{\partial \alpha}{\partial r }(0,0) r(t)+\epsilon (t)
 \end{equation}
where $\pi^*(.,.)$ and $\pi(.,.)$ are given in (\ref{contr1}) and (\ref{contr2}), respectively. and 
 \begin{equation}
 \alpha(r(t),\theta_r(t))=\pi(r(t), \theta_r(t))-\pi^*(r(t), \theta_r(t))
\end{equation}
Now,  to calculate $\Lambda(s)$ let consider the function of he neural controller (\ref{contr2})
\begin{equation}
\pi(r(t), \theta(t)) = w_{4} \tanh\left(w_{1} \theta(t) + w_{2} r(t) + w_{3} \pi(r(t), \theta(t)) + b_{1} \right) + b_{2}.
\end{equation}
Differentiating both sides with respect to \( \theta \) and applying implicit differentiation, we obtain:
\begin{equation}
 \frac{\partial \pi}{\partial \theta} (r(t), \theta(t))= w_4 \text{sech}^2(\varrho(r(t), \theta(t))) \left(w_1 + w_3 \frac{\partial \pi}{\partial \theta}(r(t), \theta(t)) \right),
\end{equation}
where
\begin{equation}
 \varrho(r(t), \theta(t)) = w_{1}\theta(t) + w_{2}r(t) + w_{3} \pi(r(t), \theta(t)) + b_{1}.
\end{equation}
Rearranging and solving for \( \frac{\partial \pi}{\partial \theta} \), we obtain:
\begin{equation}\label{lamda}
\Lambda(s(t))=\frac{\partial \pi}{\partial \theta} (r(t), c(t))= \frac{w_4 w_1 \text{sech}^2(\varrho(r(t), c(t)))}{1 - w_4 w_3 \text{sech}^2(\varrho(r(t), c(t)))}.
\end{equation}
where the trajectory \(c \in \mathbb{Co}(\theta,\theta_r)\) lies on the curve \(\vartheta (\nu) = \theta + \nu(\theta_r- \theta)\) for some \(\nu \in [0, 1]\). Also, it is clear that  
\begin{eqnarray}\label{alpha}
    \frac{\partial \alpha}{\partial \theta_r}(0,0)=\frac{w_4 w_1(1-\tanh^2(b_1))}{1-w_4w_3(1-\tanh^2(b_1))}, ~\frac{\partial \alpha}{\partial r }(0,0)=\frac{w_4 w_2(1-\tanh^2(b_1))}{1-w_4w_3(1-\tanh^2(b_1))}+9
\end{eqnarray}

Therefore, the dynamical error system became:
\begin{eqnarray}\left\{ \begin{array}{ll}
 \dfrac{d}{dt}\begin{bmatrix} \delta\omega\\ \delta\theta\end{bmatrix}=\begin{bmatrix} -6 & -9\\ 1 & 0\end{bmatrix}\begin{bmatrix} \delta\omega\\ \delta\theta\end{bmatrix}+\begin{bmatrix} 10\\ 0\end{bmatrix}q(t)+\begin{bmatrix} 1\\ 0\end{bmatrix}\delta\tau(t),\\\\
\delta\tau(t)=\Lambda(s)\delta\theta(t) +\frac{\partial \alpha}{\partial \theta_r}(0,0)\theta_r(t)+ \frac{\partial \alpha}{\partial r }(0,0) r(t)+\epsilon (t),
\end{array}\right .
\label{armerror}
\end{eqnarray}
where the parameters $\Lambda(s)$, $\frac{\partial \alpha}{\partial \theta_r}(0,0)$ and $\frac{\partial \alpha}{\partial \theta_r}(0,0)$ are given in (\ref{lamda}) and (\ref{alpha}).
\subsubsection{\color{blue} Step 4: Characterisation of the nonlinearities $\Delta_{\delta}$ and $\Delta_\epsilon$ using IQCs}
To choose the type of IQCs of the uncertainties considered for this simulation example, let us start with the static nonlinearity $ \Delta_{\delta} (\theta) = \theta- \sin(\theta)$. It is clear  from  the drawn figure Fig. \ref{uncer1} that the uncertainty $\Delta_{\delta} (\theta)$ is slope-restricted and sector bounded. Fig. \ref{uncer1} provides the IQCs characterisation of this nonlinearity and plots the graph  $\Delta_{\delta}=f (\theta)$ where $\Delta_{\delta}(\theta(t))=\varphi(\theta,t)$ and $\tilde{\alpha} \theta(t) \leq\varphi(\theta,t)\leq \tilde{\beta} \theta(t)$ for all $\theta\in[-\dfrac{\pi}{2},~ \dfrac{\pi}{2}]$.  Simple calculation of the slope gives $\tilde{\alpha}=0$ and $\tilde{\beta}=0.364$ representing the bounds of the uncertainty $\Delta_{\delta}$. 
\begin{figure}[thpb]
      \centering
   \includegraphics[scale=0.2]{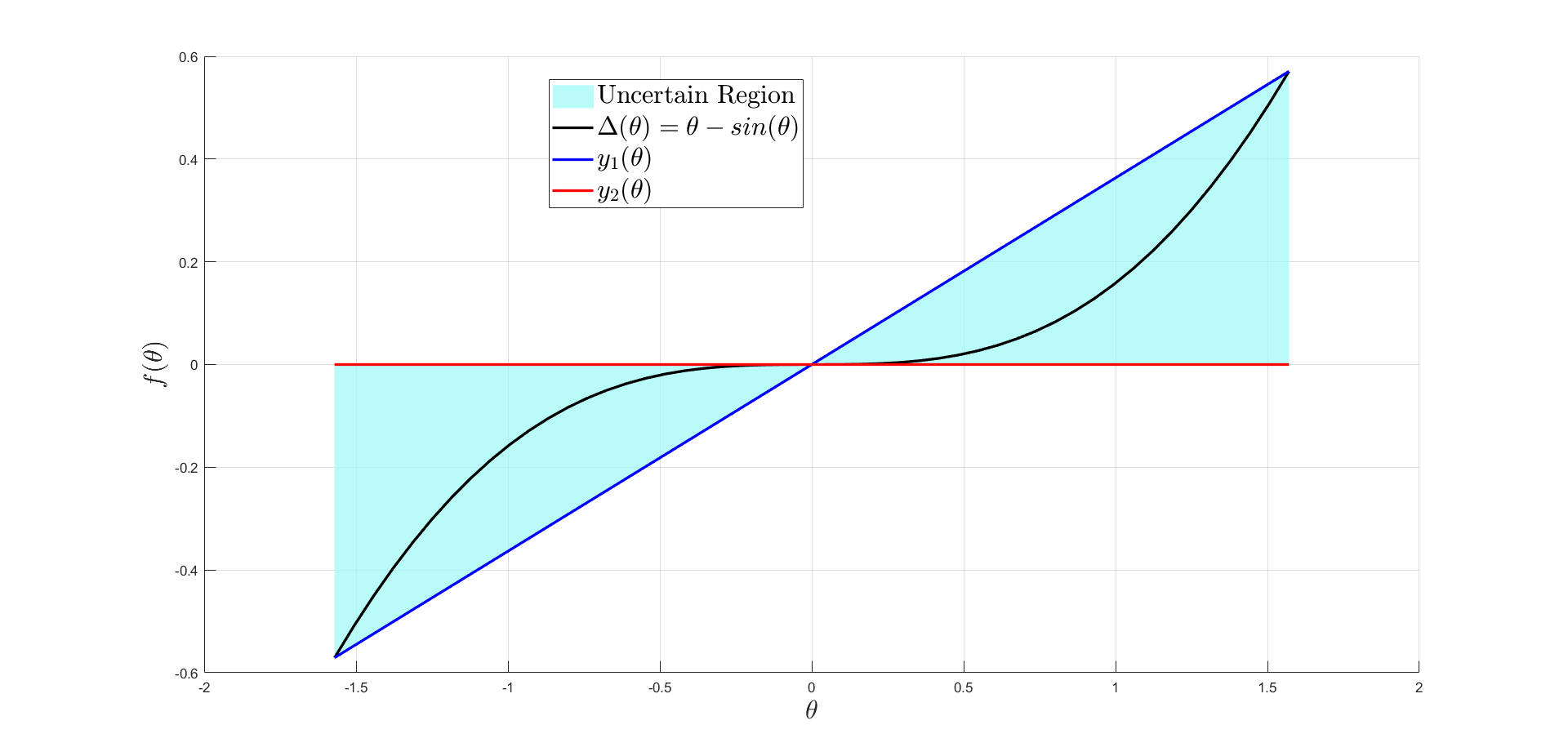}
\caption{ IQCs characterisation  of the nonlinearity $\Delta_{\delta}$.}\label{uncer1}
   \end{figure}
 Similarly, we have the nonlinearity $\epsilon(t)=\Delta_\epsilon (r(t),\theta_r(t))=\mathcal{O}(r(t),\theta_r(t))$ which is reminder of the first order approximation of  $\alpha(r(t),\theta_r(t))=\pi(r(t), \theta_r(t))-\pi^*(r(t), \theta_r(t))$, where $\pi^*$ and $\pi$ are given in (\ref{contr1}) and (\ref{contr2}), respectively. According to (\ref{tho}) we have 
 \begin{eqnarray}
        |\epsilon(t)| = |\mathcal{O}(d(t), \hat{x}(t))| \leq \kappa_1 |r(t)| + \kappa_2 |\theta_r(t)|.
 \end{eqnarray}
for all $\theta_r,~r\in[-\dfrac{\pi}{2},~ \dfrac{\pi}{2}]$. 
\subsubsection{\color{blue} Step 5: IQCs based analysis results}
Now, we would like to compute an upper bound on the $\mathcal{L}_2$-gain from  the input vector $[r, ~\theta_r]^T$ to the output error $\theta-\theta_r$. We use IQClab toolbox to express and formulate the $\mathcal{L}_2$ gain estimation problem.  We only need the two linear systems, the closed-loop reference model and the dynamical error system, since the output of the closed-loop system with the NN controller is the sum of the outputs of those two systems. The bounds of the uncertainty $\Delta_{\delta}$ and the nonlinearity $\Delta_\epsilon$ derived in the previous sub-section are also provided as inputs to the IQClab Toolbox.

According to the results given by IQClab for the model in (\ref{armerror}) corresponding to the original Single-Link Robot Arm Control Problem:
\begin{equation}\label{gamma}
    \sup_{s \in \Omega.}\sup_{0 \neq [r(t) ~\theta_r(t)]^T \in \mathcal{L}_{2},\zeta(0) = 0} \frac{\left| \left|\theta\left( t \right) - \theta_r(t)\right| \right|_2}{\left| \left|[r(t) ~\theta_r(t)]^T\right| \right|_2}  < \gamma = 0.04009
\end{equation}

According to \cite{oxford}, the time Integral of the Square of the Error (ISE) is commonly used in control systems as a measure of system performance. Thus, \(\gamma\) in (\ref{gamma}) represents the worst-case mean integrated squared error between the closed-loop system with the trained neural controller and the closed-loop reference system. In other word, the value of \(\gamma\) quantifies the worst-case relative error between the closed-loop reference model, which serves as the basis for training the neural network controller, and the uncertain system operating with the designed neural controller. To illustrate further the results of our controlled Single-Link Robot Arm with our designed neural controller, we propose two study cases: A and B. The reference of the first case study is periodically applied to the uncertain system with a neural controller and the reference closed-loop model.   Using the equation given in  (\ref{gamma}), we can plot the expected region by utilise the information about the worst-case of the error and the known linear reference model as follow: 
\begin{equation}\label{gamma1}
       \left\|\theta\left( t \right) - \theta_r(t)\right\|_2<\ \gamma \left\| [r(t)~ \theta_r(t)]^T\right\|_2=0.04009 \sqrt{\left\| r(t)\right\|_2^2+\left\| \theta_r(t)\right\|_2^2}
 \end{equation}
Knowing that, 
\begin{eqnarray}
\theta_r(t)- \left\|\theta\left( t \right) - \theta_r(t)\right\|_2 \leq\theta\left( t \right) \leq \theta_r(t)+ \left\|\theta\left( t \right) - \theta_r(t)\right\|_2
\end{eqnarray}
Using (\ref{gamma1}),  we conclude that 
\begin{eqnarray}
\theta_r(t)- 0.04009 \sqrt{\left\| r(t)\right\|_2^2+\left\| \theta_r(t)\right\|_2^2}\leq\theta\left( t \right) \leq \theta_r(t)+ 0.04009 \sqrt{\left\| r(t)\right\|_2^2+\left\| \theta_r(t)\right\|_2^2}
\end{eqnarray}
Using this equation is of significant importance, as it implies that, by relying solely on the known quantities \(r(t)\) and \(\theta_r(t)\), we can predict the behaviour of the uncertain system controlled by the neural network. More specifically, it allows us to determine the expected regions and boundaries within which the system’s output will remain, despite uncertainties and nonlinearities.

This insight is particularly crucial in control applications, as it enables the assessment of the robustness and reliability of the neural network controller to maintain stability and performance under varying conditions. By characterising these boundaries, we gain a deeper understanding of the extent to which the trained neural controller approximates the desired closed-loop reference model. Moreover, this formulation provides a foundation for further optimisation, enabling the refinement of the neural controller to enhance its adaptability and accuracy in real-world applications. This worst-case result can be used to improve the robustness of the NN controller by making the dynamical error smaller, which in turn will guarantee better performance of the NN controller when implemented. This worst-case measure is also an indication of the appearance likelihood of any undesirable behaviour in the closed-loop system when adopting the designed NN controller.

\begin{figure}[thpb]
      \centering
   \includegraphics[scale=0.4]{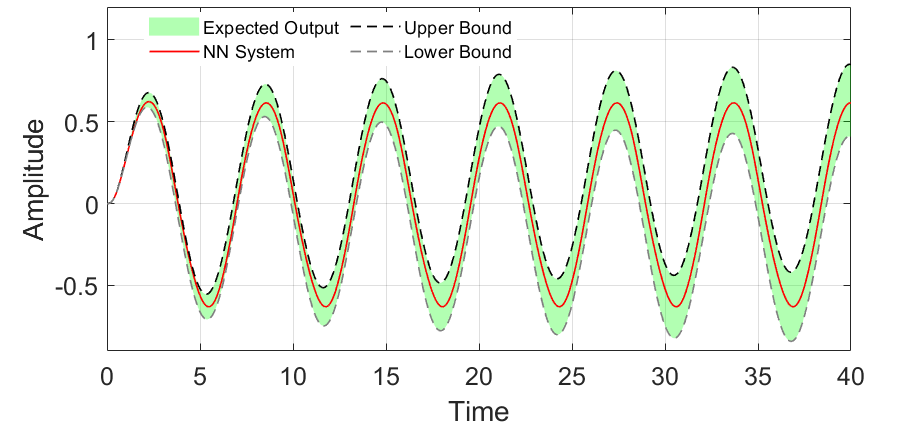}
      \caption{\textbf{Case Study A}:  Plot for the closed-loop system with the NN within the region of expectation .}
     \label{caseb1}
   \end{figure}
   \begin{figure}[thpb]
      \centering
   \includegraphics[scale=0.4]{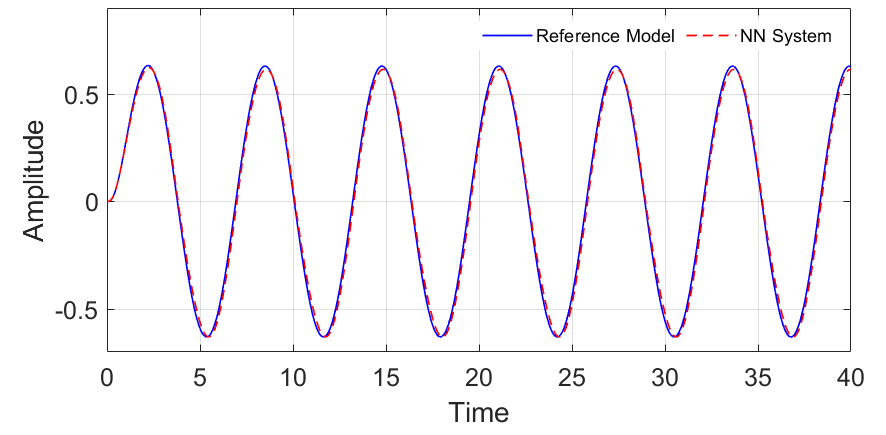}
\caption{\textbf{Case Study A}:  The controllers $\pi$ and $\pi^*$  outputs.}
     \label{caseb2}
   \end{figure}
 \begin{figure}[thpb]
      \centering
   \includegraphics[scale=0.4]{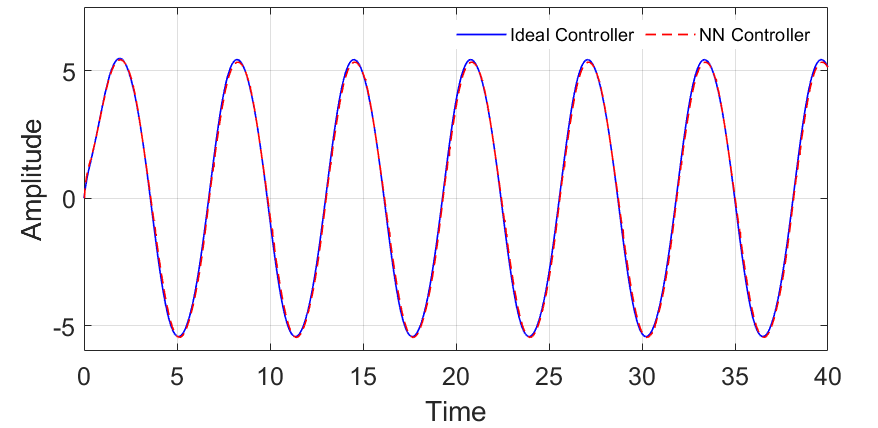}
      \caption{\textbf{Case Study A}:  Plot of the output of reference model response $\theta_r$ indicated in blue and the one of the closed-loop uncertain robotic arm system with neural controller $\theta$ indicated in red.}
     \label{caseb3}
   \end{figure}
\begin{figure}[thpb]
      \centering
   \includegraphics[scale=0.4]{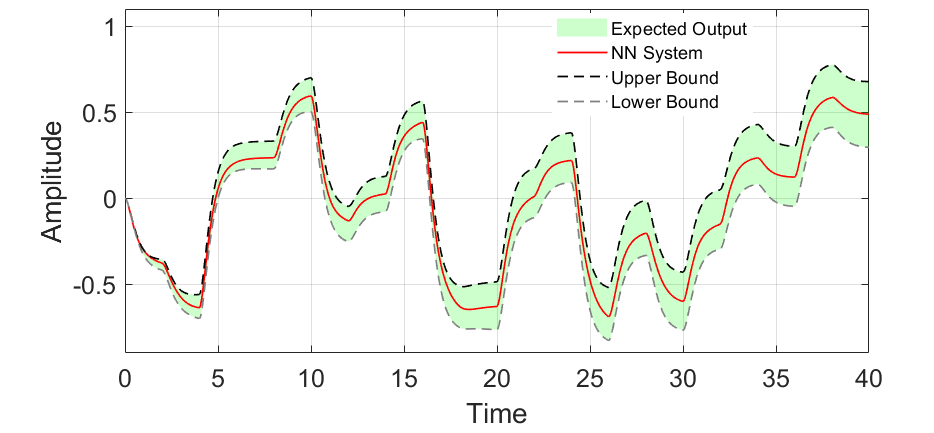}
      \caption{\textbf{Case Study B}:  Plot for the closed-loop system with the NN within the region of expectation .}
     \label{casea1}
   \end{figure}
   \begin{figure}[thpb]
      \centering
   \includegraphics[scale=0.4]{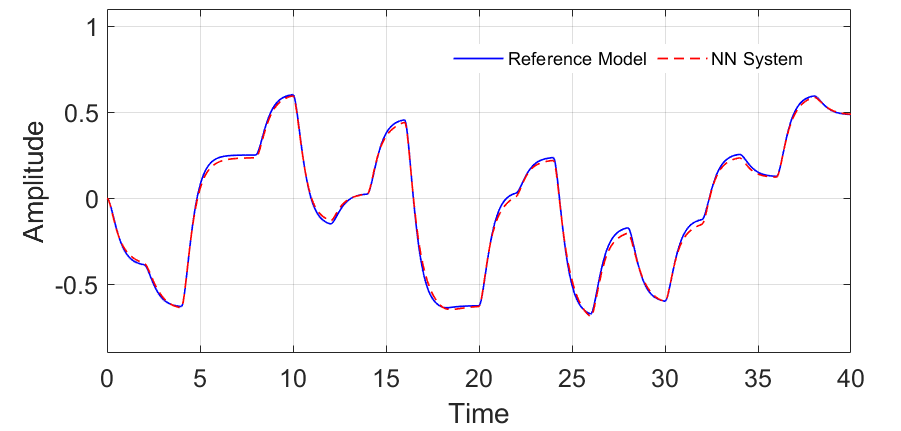}
\caption{\textbf{Case Study B}:  The controllers $\pi$ and $\pi^*$  outputs.}
     \label{casea2}
   \end{figure}
 \begin{figure}[thpb]
      \centering
   \includegraphics[scale=0.4]{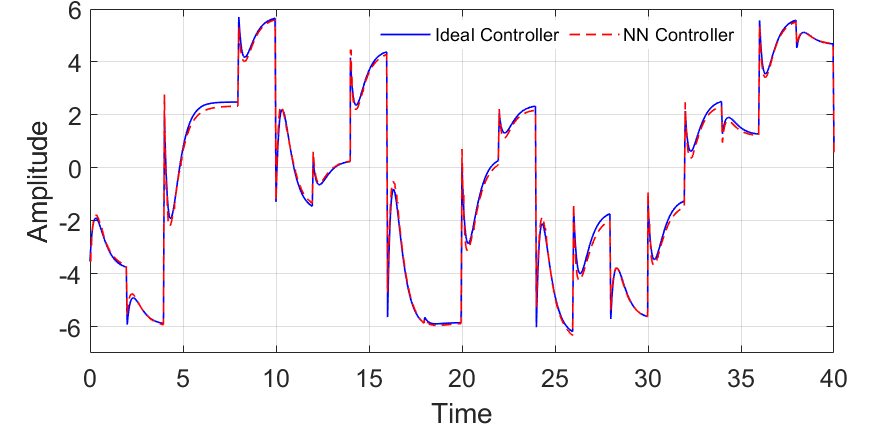}
      \caption{\textbf{Case Study B}:  Plot of the output of reference model response $\theta_r$ indicated in blue and the one of the closed-loop uncertain robotic arm system with neural controller $\theta$ indicated in red.}
     \label{casea3}
   \end{figure}
Figure \ref{caseb1} illustrates the response of the system with the trained neural network (NN) controller, where it is evident that the output remains within the expected bounded region. In Figure \ref{caseb2}, the responses of both the closed-loop reference model and the uncertain plant with the NN controller are presented for this first case study. From this figure, it can be observed that the uncertain plant with the NN controller \textbf{closely follows} the response of the closed-loop reference model.  

Additionally, the output of the neural controller remains highly similar to that of the reference controller used in the closed-loop reference model, as depicted in Figure \ref{caseb3}. The observations made for Figures \ref{caseb1}–\ref{caseb3} are equally applicable to the non-periodic case study B, illustrated in Figures \ref{casea1}–\ref{casea3}, where the system’s response with the NN controller consistently remains within the expected performance region. These results further validate our findings, demonstrating the effectiveness of the neural network controller in maintaining stability and ensuring system performance.

\subsection{\color{blue}Deep Guidance and Control of Apollo Lander}
 In the classical Guidance and Control of Apollo Lander described in  \cite{iros21} ,   GC system generates trajectory and thrust commands to manoeuvre the lander for a given initial position, velocity, and desired landing target.  The control part is compensated for by the Martian atmosphere and converted into inertial coordinates before the position commander uses the information \cite{iros20}.  
\subsubsection{\color{blue} Step 1: Problem Setting}
 We perform the test of the trained network in closed-loop using the parameters listed in Table \ref{tab1} with the aim of validating its performance.  In terms of applying aerodynamic forces and torques which are external disturbances  on the lander, coefficient of drag $C_D$ is $2.0$, and it is assumed to be constant, whereas surface-relative head wind is in the range of $20 m/s$. Furthermore, the air density $\rho$ is assumed to be $0.023 kg/m^3$ throughout. Under this scenario, the worst-case aerodynamic resistance during the descent phase is less than 2\% of the vehicle's maximum thrust. Due to the close-loop control action, aerodynamic forces and torques are automatically compensated.
\begin{table}[!h]
\centering
  \caption{Simulation Parameters }
\begin{tabular}{l|c}
\hline
	Parameter &     value  \\\hline
	Initial Position  $p_0$ $[m]$ &   $[-5632.2, 709.25 ,6190.5]^T$ \\ 
    Initial Velocity   $v_0$ $[m/s]$ &     $[206.48, -26.006 , -103.51]^T$   \\ 
	Initial Acceleration  $a_0$ $[m/s^2]$ &   $[-2.1124,	0.24947,	-2.6404]^T$   \\ 
	Intial Mass $m_0$ $[Kg]$ &  600.3  \\ 
 The capsule reference area  \(S\) & 5.137426149499100\\ 
 Mars Gravity $g$ $[m/s^2]$& $[0, 0, 3.725258]^T$ \\
 Engine Maximum Thrust $T$ $[N]$& 3600\\
 Maximum Throttle Level $T_{max}$& 100\%\\
  Minimum Throttle Level $T_{min}$ & 30\%\\
 \hline
\end{tabular}
    \label{tab1}
\end{table}
\paragraph{Atmosphere}
As part of the Mars environment module, the horizontal wind is included in the atmospheric model. Based on lookup tables, an atmospheric model specifies the properties of the Martin atmosphere at a specific location.  The atmospheric properties are obtained directly from the Mars Climate Database (MCD) v.5.2 at the mission’s specified landing coordinates.  MCD is derived from numerical simulations of the Martian atmosphere using General Circulation Models (GCM) and validated by observations of the Martian atmosphere. For more details see \cite{iros20}. 
\paragraph{Lander Dynamics}
In closed-loop simulations, Newton's second law is used to model a point mass's robotic lander motion, where the lander motion is expressed in the LENU frame as follow: 
\begin{equation}
    \frac{d\left\lbrack r \right\rbrack_{LENU}}{d t} = \left\lbrack v \right\rbrack_{LENU},
\end{equation}
where  (\(\left\lbrack v \right\rbrack_{LENU}\))  refer to the relative velocity of the robotic lander. In in the LENU frame and the lander mass (\(m\)),  The dynamics equation is given based on the force resultant
(\(\left\lbrack F \right\rbrack_{LENU}\)) expressed as: 
\begin{equation}
    \frac{d\left\lbrack v \right\rbrack_{LENU}}{dt} = \left\lbrack a \right\rbrack_{LENU} = \frac{\left\lbrack F \right\rbrack_{LENU}}{m}.
\end{equation}
\paragraph{Aerodynamics}
In the simulations, a lookup table is used as the aerodynamics model of the Apollo capsule.  In this module, only aerodynamic drag forces are considered
following equation:
\begin{equation}
    F_{D} = \frac{1}{2}\rho v^{2}C_{D}\left( \text{Ma} \right)S, 
\end{equation}
where \(\rho\) represents the atmosphere density, \(C_{D}\) denotes the capsule drag coefficient as a function of the Mach number (\(\text{Ma}\)),
 the capsule reference area represented by  the \(S\), and \(v\) is the capsule
velocity relative to the fluid velocity expressed in the landing site
East-North-UP frame.  
\begin{figure}[thpb]
      \centering
   \includegraphics[scale=0.33]{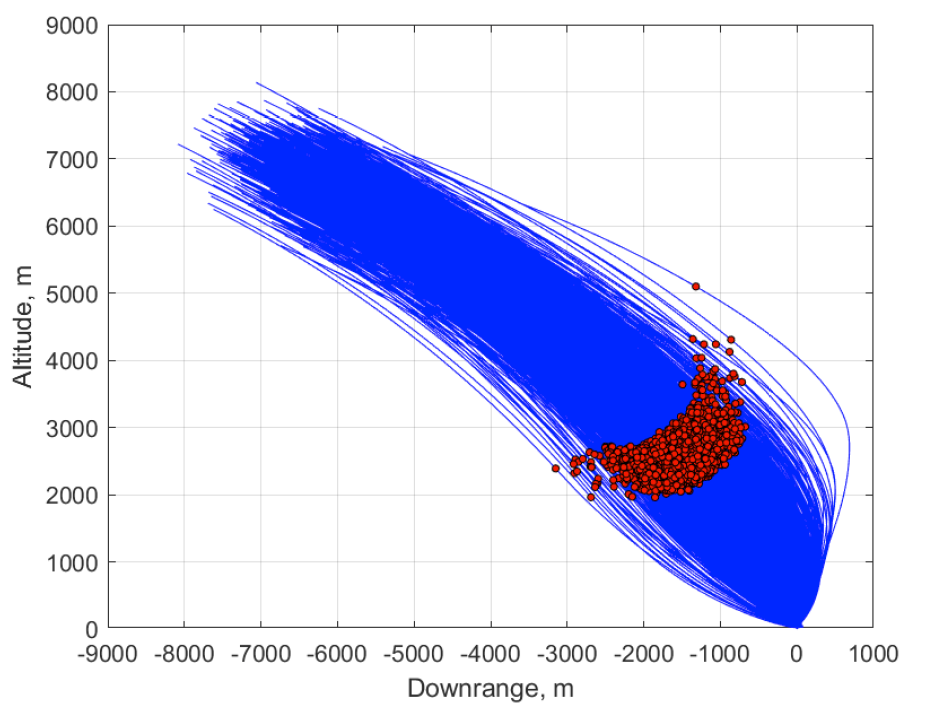}
\caption{Plot of the 1537 trajectories \cite{iros21}. The marked dots representing the trigger locations.  An optimal altitude to trigger is in the range of 2 km to 3 km, and in some trajectories, overshooting the target is the best behavior.}
      \label{data}
   \end{figure}
\paragraph{Actuators}
In the actuator model, four thrusters engines are used to generate forces concerning the center of mass of the vehicle.  The main engine responsible on the force \(F_{Thr}\) is located along with the vehicle $z$-axis, while the remaining are the sources of saturated forces  \(F_{Trq}\) are placed at a distance \(l\) from the mass center to generate pure torques along positive and negative axes. To calculate the fuel consumption, the mass flow rate is integrated:
\begin{equation}
    {\dot{m}}_{fuel} = {V_{e}}^{- 1}\left( F_{Thr} + \sum F_{Trq} \right),
\end{equation}
where   \(V_{e} = g_{0}I_{sp}\).
\paragraph{Apollo Guidance }The algorithm is based on the idea of following the reference trajectory backwards in time from the target point to the present moment. The time-to-go (\(t_{go}\)), represents the time left until the desired landing is reached, so it tends to zero as time gets closer to the goal.  In \cite{iros22}, the reference trajectory satisfies a two-boundary problem with five degrees of freedom. Accordingly, the reference trajectory must be a polynomial function of fourth or higher order.  Therefore, the reference position (\(r_{d}\) given at \(t_{go}\)  as a function of the target state as:
\begin{equation}\label{eq6}
    r_{d} = \ r_{t} + v_{t}t_{go} + a_{t}\frac{{t_{go}}^{2}}{2} + j_{t}\frac{{t_{go}}^{3}}{6} + s_{t}\frac{{t_{go}}^{4}}{24},
\end{equation}
where \(r_{t}\) refers to the target position, \(v_{t}\) represents
the target velocity and \(a_{t}\), \(j_{t}\),
\(s_{t}\) denote the target acceleration, jerk and snap,
respectively.
According to (\ref{eq6}),  the work of  \cite{iros7} describes the acceleration command (\(a_{cmd}\)) as a quadratic polynomial i.e.
\begin{equation}\label{eq8}
    a_{\text{cmd}} = \ C_{0} + C_{1}t_{go} + {{C_{2}t}_{go}}^{2}.
\end{equation}
The coefficients of the  polynomial are derived by solving a system of three
equations comprising (\ref{eq8}). The two other equations related to  the target velocity and position equations are obtained by by integrating the target acceleration equation:
\begin{equation}\label{eq9}
    \left\{ \begin{array}{l}
a_{t} = \ C_{0} + C_{1}t_{go} + {{C_{2}t}_{go}}^{2} \\
v_{t} = C_{0}t_{go} + \frac{1}{2}C_{1}{t_{go}}^{2} + \frac{1}{3}C_{2}{t_{go}}^{3} + v \\
r_{t} = vt_{go} + {\frac{1}{2}C}_{0}{t_{go}}^{2} + \frac{1}{6}C_{1}{t_{go}}^{3} + \frac{1}{12}C_{2}{t_{go}}^{4} + r \\
\end{array} \right.
\end{equation}
where \(a_{t}\) represents the reference acceleration vector,
\(r\) and \(v\) denote the vehicle current position and
velocity, respectively.  The following values for the
polynomial coefficients \(C_{0}\), \(C_{1}\)
and \(C_{2}\)) are obtained by solving the following matrix system:
\begin{equation}\label{eq10}
    \begin{bmatrix}
a_{t}\\
v_{t}\\
r_{t} 
\end{bmatrix}=    \begin{bmatrix}
1&t_{go}&t_{go}^{2}\\
t_{go}&\frac{1}{2}t_{go}^{2}&\frac{1}{3}t_{go}^{3}\\
\frac{1}{2}t_{go}^{2}&\frac{1}{6}{t_{go}}^{3} &  \frac{1}{12}{t_{go}}^{4}
\end{bmatrix} \begin{bmatrix}C_{0} \\ C_{1} \\ C_{2} \end{bmatrix} + \begin{bmatrix}0 \\v \\ v_{t_{go}} \end{bmatrix} .
\end{equation}
Therefore, 
\begin{equation}
    \left\{ \begin{array}{l}
C_{0} = \ a_{t} - 6\left( \frac{v_{t} + v}{t_{go}} \right) + 12\left( \frac{r_{t} - r}{{t_{go}}^{2}} \right) \\
C_{1} = \  - 6\left( \frac{a_{t}}{t_{go}} \right) + 6\left( \frac{{5v}_{t} + 3v}{{t_{go}}^{2}} \right) - 48\left( \frac{r_{t} - r}{{t_{go}}^{3}} \right) \\
C_{2} = \ 6\left( \frac{a_{t}}{{t_{go}}^{2}} \right) - 12\left( \frac{{2v}_{t} + v}{{t_{go}}^{3}} \right) + 36\left( \frac{r_{t} - r}{{t_{go}}^{4}} \right)
\end{array} \right.
\end{equation}
By  defining the vertical component of the acceleration profile as a linear function of \(t_{go}\). Hence, solving (\ref{eq8}) for \(t_{go}\) two solutions
are obtained: 
\begin{equation}
\begin{array}{l}
\mathbf{If} ~ \left| \left( a_{t} \right)_{z} \right| > 0 \\
t_{go} = \frac{2\left( v_{t} \right)_{z} + \left( v \right)_{z}}{\left( a_{z} \right)_{t}} + \sqrt{\left\lbrack \frac{2\left( v_{t} \right)_{z} + \left( v \right)_{z}}{\left( a_{t} \right)_{z}} \right\rbrack^{2} + \frac{6\left\lbrack \left(\mathbf{ r} \right)_{z} - \left( r_{t} \right)_{z} \right\rbrack}{\left( a_{t} \right)_{z}}}, \\
\mathbf{else}~\mathbf{if}\left( a_{t} \right)_{z} = 0
\\
t_{go} = \frac{3\left\lbrack \left( r_{t} \right)_{z} - \left( r \right)_{z} \right\rbrack}{\left( v \right)_{z} + 2\left( v_{t} \right)_{z}},  
\end{array}
\end{equation}
where $(.)_z$ denotes the vertical component of the variable $(.)$.

A robustness analysis of feedback systems Apollo lander based on NN controllers against potential uncertainties is presented here.  To achieve this, we need to follow the below pattern where we aiming at keeping  the output of the closed-loop system with a NN controller close to the output of an ideal and trusted closed-loop reference model when its input changes within a bounded set. 
  \subsubsection{\color{blue} Step 2: Linear Reference Model}
  We start by training a neural controller with the aim that the lander system with this neural controller tracks classical Apollo controller provided in \cite{iros21}. A linear reference model is what we want. To generate the linear closed-loop reference model, we use the Five Tau rule, which states that transient events die down after five Tau seconds. In other words, a transition from a (steady) state to another (steady) state lasts only for five Tau seconds. Let consider the following form of the reference model:
  \begin{equation}
      \left[\begin{matrix}\frac{d\hat{r}}{dt}\\\frac{d\hat{v}}{dt}\\\end{matrix}\right]=\left[\begin{matrix}\hat{v}(t)\\\ q\hat{v}(t)+\hat{u}(t)\\\end{matrix}\right]
  \end{equation}
 where 
$$q=\left(\begin{matrix}q_x&0&0\\0&q_y&0\\0&0&q_z\\\end{matrix}\right)$$

MATLAB is used to find the matrix $a$ using regression linear technique:
\begin{equation}
    q= \arg \min_{v}(\frac{1}{2}\frac{\rho}{m}v^2C_D\left(\mathrm{Ma}\right)S\ -q v(t))
\end{equation}
As a result, we obtain the matrix $a$ as follows: 
\begin{equation}
    a=\left(\begin{matrix}-0.0087         &0&0\\0&-0.0075         &0\\0&0&-0.0077\\\end{matrix}\right)
\end{equation}
The linear reference model is now complete, and we can move on to finding the best controller which has dynamical behaviour  too close to the nonlinear classical one. MATLAB is used to implement the poles placement technique (state feedback control) along with Five Tau rule. 

The neural controller in TABLE \ref{tab2} was trained with the aim that the lander dynamics with this neural controller tracks the closed-loop reference Apollo model. This means that the dynamic behaviour of the system in the closed-loop system after the training of the neural controller is close to the linear reference closed-loop dynamics. In an ideal world, the closed-loop Apollo model would be matched by the system with a NN controller at all times $t$. In reality, this is almost impossible. In this example, we are calculating the worst relative squared difference, $\gamma$, between the classical Apollo system provided in \cite{iros21} with neural control system closing the loop, while subject to its nonlinear functions, and the ideal closed-loop reference model. For this purpose, let assume the static nonlinearity $\Delta_{\delta}\left(v(t)\right)=\frac{1}{2}\frac{\rho}{m}v^2(t)C_D\left(\mathrm{Ma}\right)S\ -qv(t)$ which is slope-restricted, and sector bounded. The Lander dynamics are then rearranged into the following form: \begin{equation}
    \left\{\begin{matrix}\left(\begin{matrix}\frac{dr}{dt}\\\frac{dv}{dt}\\\end{matrix}\right)=\left[\begin{matrix}v\\qv\left(t\right)+u\left(t\right)+\Delta_{\delta}\left(v\left(t\right)\right)\\\end{matrix}\right]\\\ \Delta_{\delta}\left(v(t)\right)=\frac{1}{2}\frac{\rho}{m}v^2(t)C_D\left(\mathrm{Ma}\right)S\ -av(t)\\y\left(t\right)=\left(\begin{matrix}r\left(t\right)\\v\left(t\right)\\\end{matrix}\right)\\\end{matrix}\right.
\end{equation}
\begin{table}[!h]
\centering
  \caption{Neural Controller Architecture}
\begin{tabular}{l|c}
\hline
	Layer &     No. Neurons/Activation Function \\\hline
Input layer	& 6/ Linear  $ [r^T(t), v^T(t)]^T$ \\ 
$1^{st}$ layer	&   40/Tanh \\ 
    $2^{nd}$ layer &  40/Sigmoid\\ 
	$3^{rd}$layer  &   40/Tanh \\ 
	Output layer  &  3/ Linear\\ 
 \hline
\end{tabular}
    \label{tab2}
\end{table}
\subsubsection{\color{blue} Step 3: Error Dynamical System}
By using the closed-loop reference model and the new mathematical form of the Lander dynamical motion, we can now construct the error dynamical system. Let $\delta  r=r-\hat{r}$,  $\delta  v=v-\hat{v}$ and $\delta  u=u-\hat{u}$  then
\begin{equation}
    \left\{\begin{matrix}\left(\begin{matrix}\frac{d\delta r}{dt}\\\frac{d\delta  v}{dt}\\\end{matrix}\right)=\left[\begin{matrix}\delta  v\\q\delta  v\left(t\right)+\delta u\left(t\right)+\Delta_{\delta}\left(v\left(t\right)\right)\\\end{matrix}\right]\\\ \Delta_{\delta}\left(v(t)\right)=\frac{1}{2}\frac{\rho}{m}v^2(t)C_D\left(\mathrm{Ma}\right)S\ -qv(t)\\z\left(t\right)=\left(\begin{matrix}\delta r\left(t\right)\\\ \delta v\left(t\right)\\\end{matrix}\right)\\\end{matrix}\right.
\end{equation}
According to lemma \ref{lemma4}, the approximation error $\delta u=u-\hat{u}$ given in the equation above can be expressed as: 
\begin{equation}
    \delta u\left(t\right)=\Lambda(s) \zeta(t)+\frac{\partial \alpha}{\partial \hat{v}}(0)\hat{v}(t)+\epsilon (t),
 \end{equation}
 where in this case we don't have a reference signal i.e. $d(t)=0$, the LPV matrix $ \Lambda(s(t))=\nabla_{s } u(\hat{s})$ is the Jacobian matrix 
 of the neural network at point $s \in \mathbb{Co}([r^T~ v^T]^T, [\hat{r}^T~ \hat{v}^T]^T)$, $\epsilon (t)=\mathcal{O}(\hat{v}(t))$ and  the training error$\alpha\left(\hat{v}\left(t\right)\right)=u\left(\hat{v}\left(t\right)\right)-\hat{u}\ \left(\hat{v}\left(t\right)\right)$. Then, the dynamical error became: 
\begin{eqnarray}
    \left\{ \begin{array}{ll}
    \begin{matrix}\left[\begin{matrix}\frac{d\delta r}{dt}(t)\\\frac{d \delta v}{dt}(t)\\\end{matrix}\right]&=\left[\begin{matrix}\delta v(t)\\q\delta v\left(t\right)+\Delta_{\delta}\left(v\left(t\right)\right)+\Delta_\epsilon\left(\hat{v}\left(t\right)\right)+\frac{\partial \alpha}{\partial \hat{v}}(0)\hat{v}(t)\\\end{matrix}\right] +B\Lambda\left(s\right)\left[\begin{matrix}\delta r\left(t\right)\\\delta  v\left(t\right)\\\end{matrix}\right],\\\\\Delta_{\delta}\left(v(t)\right)&=\frac{1}{2}\frac{\rho}{m}v^2(t)C_D\left(\mathrm{Ma}\right)S\ -qv(t),~\epsilon\left(\hat{v}\left(t\right)\right)=\mathcal{O}(\hat{v}(t))\\ z\left(t\right)&=C\left(\begin{matrix}\delta r\left(t\right)\\ \delta v\left(t\right)\\\end{matrix}\right),\ \ B=\left[\begin{matrix}\begin{matrix}\begin{matrix}0&0&0\\\end{matrix}\\\begin{matrix}0&0&0\\\end{matrix}\\\begin{matrix}0&0&0\\\end{matrix}\\\end{matrix}\\\begin{matrix}\begin{matrix}1&0&0\\\end{matrix}\\\begin{matrix}0&1&0\\\end{matrix}\\\begin{matrix}0&0&1\\\end{matrix}\\\end{matrix}\\\end{matrix}\right]\\\end{matrix}
    \end{array} \right.
\end{eqnarray}
\subsubsection{\color{blue} Step 4: IQCs Characterisation of the nonlinearities $\Delta_{\delta}$ and $\Delta_\epsilon$} 
To choose the type of IQCs of the uncertainties considered for this  example, let start with the static nonlinearity $\Delta_{\delta}\left(v(t)\right)=\frac{1}{2}\frac{\rho}{m}v^2(t)C_D\left(\mathrm{Ma}\right)S\ -qv(t)$. It is clear from the drawn Figure \ref{delt} that the uncertainty $\Delta_{\delta}\left(v(t)\right)$ is slope-restricted and sector bounded. Figure \ref{delt} provides the IQCs characterisation of this nonlinearity and plots the graph  $\Delta_{\delta}\left(v(t)\right)$ where $\Delta_{\delta}\left(v\left(t\right)\right)=\varphi\left(v,t\right)$ and $\tilde{\alpha}v\left(t\right)\le\varphi\left(v,t\right)\le\tilde{\beta} v\left(t\right)$ for all $v$.  Simple calculation of the slope gives $\tilde{\alpha}=\mathbf{diag} (-0.003,0.013,0.006)$ and $\beta=\mathbf{diag}( 0.008,-0.004,-0.004)$  representing the bounds of the uncertainty $\Delta_{\delta}.$

Similarly, we have the nonlinearity $\epsilon\left(\hat{v}\left(t\right)\right)=u\left(\hat{v}\left(t\right)\right)-\hat{u}\ \left(\hat{v}(t)\right)$. The graph of $\epsilon=\hat{f}\left(\hat{v}\left(t\right)\right)$ is shown in Figure \ref{epsi} where $\hat{\alpha}\hat{v}\left(t\right)\le\hat{f}\left(\hat{v}\left(t\right)\right)\le\hat{\beta}\hat{v}\left(t\right).$ The IQCs characterisation of this nonlinearity is thus conducted to  in $\hat{\beta}=\mathbf{diag} (-0.052,0,0.0053)$ and $ \widehat{ \alpha}=\mathbf{diag}(0.022,-0.024,-0.0043)$.   
\begin{figure}[thpb]
      \centering
   \includegraphics[scale=0.25]{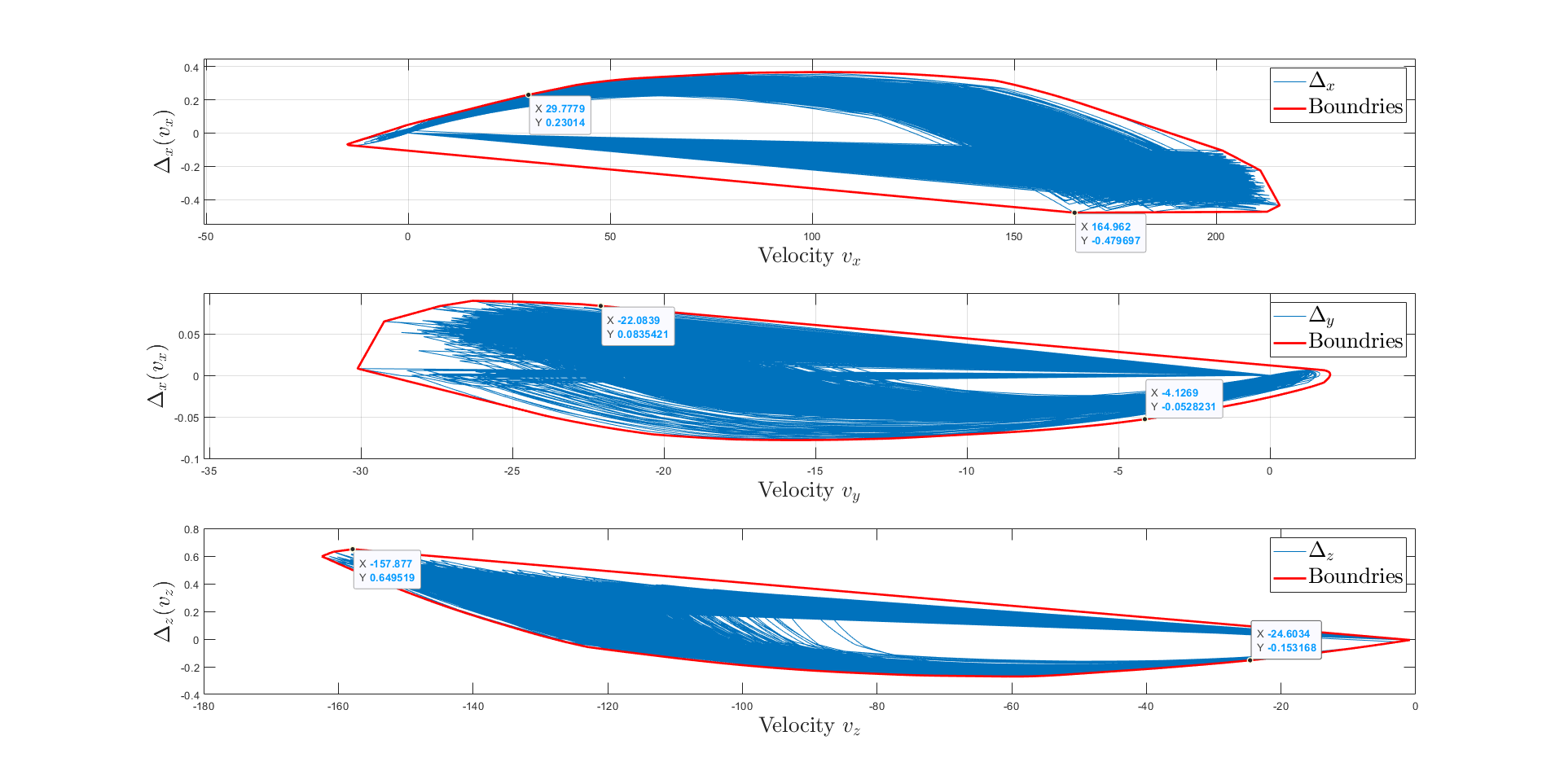}
\caption{ IQCs characterisation  of the nonlinearity $\Delta_{\delta}$.}\label{delt}
   \end{figure}

   \begin{figure}[thpb]
      \centering
   \includegraphics[scale=0.25]{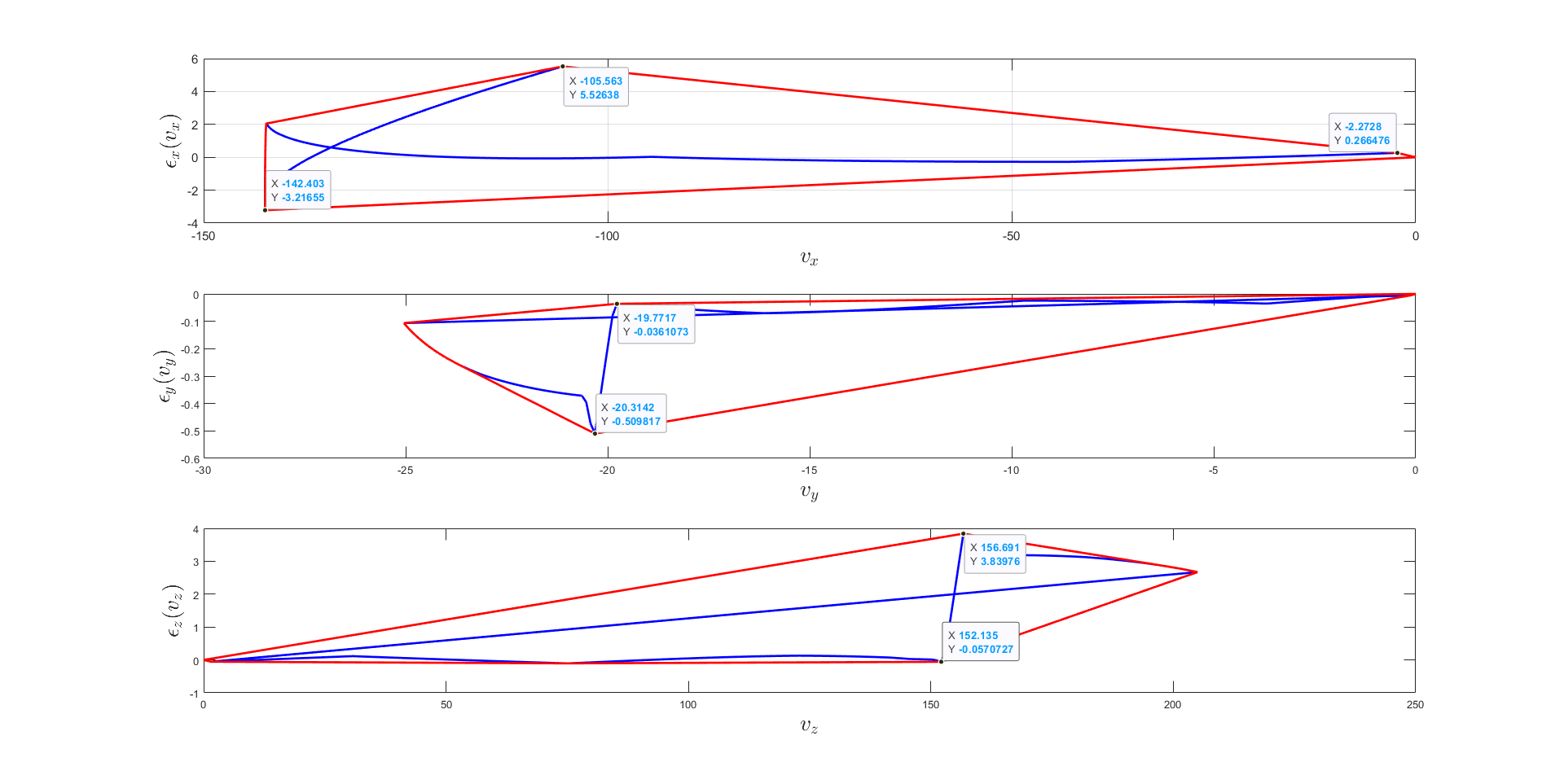}
\caption{ IQCs characterisation  of the nonlinearity $\Delta_\epsilon$.}\label{epsi}
   \end{figure}
\subsubsection{\color{blue} Step 5: IQCs based Robustness Analysis Results}
  According to the results given by IQClab Toolbox for the model corresponding to the original Apollo Robustness Control Problem: 
  \begin{eqnarray*}
      \left[\begin{matrix}\begin{matrix}\ \gamma_x&\gamma_y&\gamma_z\end{matrix} &\begin{matrix}\gamma_{v_x}&\gamma_{v_y}&\gamma_{v_z}\\\end{matrix}\\\end{matrix}\right ]^T=\left[\begin{matrix}\begin{matrix}0.0988&0.1992&0.1581\\\end{matrix}&\begin{matrix}0.0119&0.0244&0.0158\\\end{matrix}\\\end{matrix}\right]^T
  \end{eqnarray*}
The worst relative error between the closed-loop system with the trained neural controller and the closed-loop reference system and is less than 
$\left[\begin{matrix}\begin{matrix}9.88\%&19.92\%&15.81\%\end{matrix}&\begin{matrix}1.19\%&2.44\%&1.58\%\\\end{matrix}\\\end{matrix}\right]^T$ of the reference model outputs. 
To further illustrate the results of our controlled Lander with our designed neural controller, we perform some simulation results. Figure \ref{land} and Figure \ref{noland} show the response of the system with the trained NN controller for different two scenarios with different initial position and 
velocity. Clearly, the output remains inside the expected bound region. 
 \begin{figure*}[thpb]
      \centering
   \includegraphics[scale=0.3]{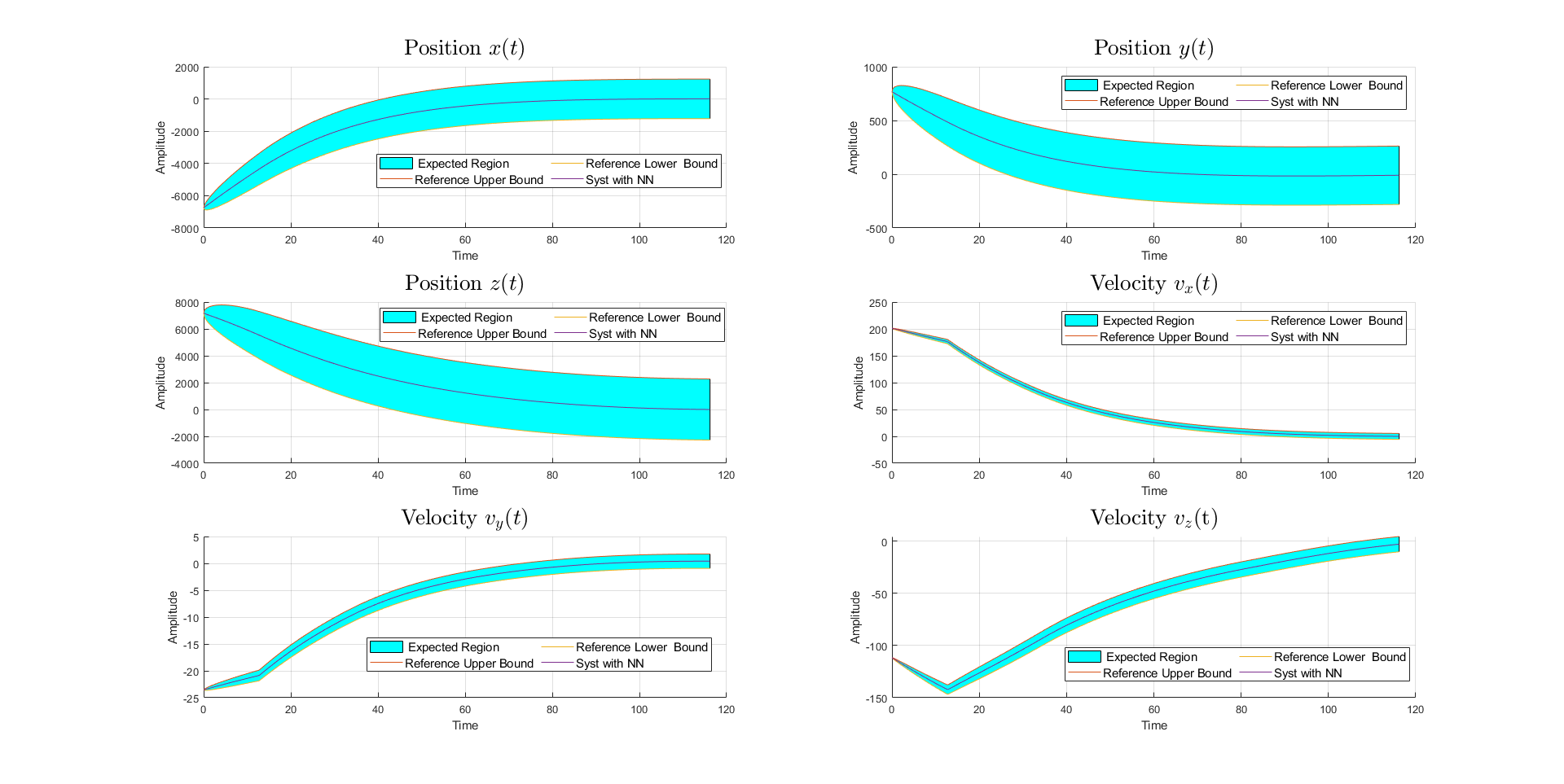}
\caption{Simulation Results, Successful Landing Scenario.}\label{land}
   \end{figure*}

    \begin{figure*}[thpb]
      \centering
   \includegraphics[scale=0.3]{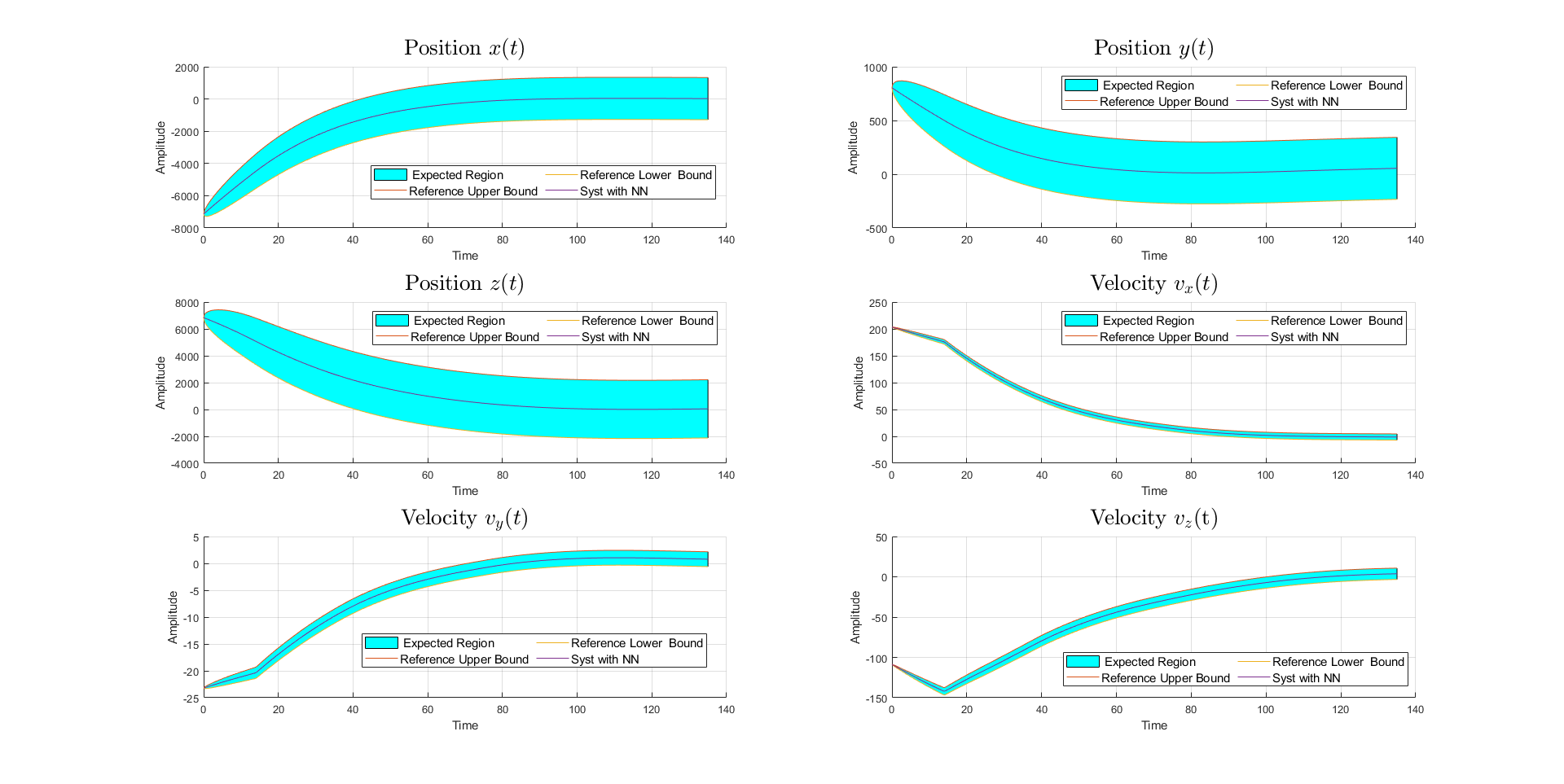}
\caption{Simulation Results, Failed Landing Scenario.}\label{noland}
   \end{figure*}
Clearly, the outcomes of this novel approach can be leveraged to enhance the robustness of neural network (NN) controllers by minimising dynamical error, thereby ensuring improved performance upon implementation. The estimated worst-case measure serves as a crucial indicator of the potential occurrence of undesirable behaviour within the closed-loop system when employing the designed NN controller.  

Furthermore, the error envelope can be effectively narrowed by utilising the Integral Quadratic Constraint (IQC) toolbox, which is compatible with both Linear Parameter-Varying (LPV) systems and hybrid systems. This compatibility is particularly significant due to the discontinuities introduced by activation functions such as ReLU (Rectified Linear Unit) and Leaky-ReLU, which result in a hybrid system framework when the system’s Jacobian is employed. By integrating IQC-based analysis, it becomes possible to systematically assess and constrain the uncertainties inherent in neural network-controlled systems, thereby improving overall stability and reliability.

\section{Conclusion}\label{sec5}
A novel technique, termed "keep-close," was introduced to ensure the safety and robustness of feedback systems equipped with neural network (NN) controllers amidst various types of uncertainties. The theoretical findings provided for worst-case analysis offer users preliminary insights into the maximum expected error between the reference model and the system's state when controlled by an NN amidst perturbations. The central concept behind this new methodology is to preserve the closed-loop system's output under an NN controller close to that of a robustly performing reference model, even when its input varies within a predefined bounded set. The "keep-close" strategy furnishes users with advanced knowledge regarding the worst-case rise and steady-state error (SSE) between the reference closed-loop model and the uncertain system managed by the neural controller. This insight is crucial for validating AI-based machine learning (ML) techniques through robust control theory, thereby enhancing user trust in these technologies. The efficacy and adaptability of this innovative approach have been empirically validated through practical implementation and testing on two distinct challenges: the Single-Link Robot-Arm Control Problem and the Deep Guidance and Control of the Apollo Lander. These real-world experiments have demonstrated the proposed method's effectiveness and flexibility across different application areas, offering empirical proof of its utility and adaptability.
\section{Appendix}
\label{appendix}
\begin{proof}
Firstly, let recall the definition of variables:
\[
\eta(t) = \begin{bmatrix}
d(t) \\ \hat{x}(t)
\end{bmatrix}, \quad
q(t) = \begin{bmatrix}
\delta(t) \\ \epsilon(t)
\end{bmatrix}.
\]
Now, we start with the error dynamic equation in (\ref{erroriqc-a}):
\begin{equation}\label{erroriqc_original}
\dot{\zeta}(t) = A\zeta(t) + B\mu(t) + \tilde{B}\delta(t),
\end{equation}
where $\zeta(0) = 0$. Also, the controller error \(\mu(t)\) from (\ref{erroriqccont}) is given by:
\begin{equation}\label{mu_expression}
\mu(t) = \Lambda(s)\zeta(t) + \frac{\partial \alpha}{\partial \hat{x}}(0,0)\hat{x}(t) 
+ \frac{\partial \alpha}{\partial d}(0,0)d(t) + \epsilon(t).
\end{equation}
 Substitute \(\mu(t)\) into \(\dot{\zeta}(t)\)
Substituting \(\mu(t)\) from (\ref{mu_expression}) into the first equation of (\ref{erroriqc_original}), we have:
\begin{align}
\dot{\zeta}(t) &= A\zeta(t) + B\Big[\Lambda(s)\zeta(t) 
+ \frac{\partial \alpha}{\partial \hat{x}}(0,0)\hat{x}(t) 
+ \frac{\partial \alpha}{\partial d}(0,0)d(t) + \epsilon(t)\Big] + \tilde{B}\delta(t), \\
&= \Big(A + B\Lambda(s)\Big)\zeta(t) 
+ B\begin{bmatrix}
\frac{\partial \alpha}{\partial d}(0,0) & \frac{\partial \alpha}{\partial \hat{x}}(0,0)
\end{bmatrix}
\begin{bmatrix}
d(t) \\ \hat{x}(t)
\end{bmatrix}
+ B\epsilon(t) + \tilde{B}\delta(t), \\
&= \Big(A + B\Lambda(s)\Big)\zeta(t) 
+ \begin{bmatrix}
B\frac{\partial \alpha}{\partial d}(0,0) & B\frac{\partial \alpha}{\partial \hat{x}}(0,0)
\end{bmatrix}
\begin{bmatrix}
d(t) \\ \hat{x}(t)
\end{bmatrix}
+ \begin{bmatrix}
\tilde{B} & B
\end{bmatrix}
\begin{bmatrix}
\delta(t) \\ \epsilon(t)
\end{bmatrix},\\
&= \Big(A + B\Lambda(s)\Big)\zeta(t) 
+ \begin{bmatrix}
B\frac{\partial \alpha}{\partial d}(0,0) & B\frac{\partial \alpha}{\partial \hat{x}}(0,0)
\end{bmatrix}
\eta(t)
+ \begin{bmatrix}
\tilde{B} & B
\end{bmatrix}
q(t). \label{zeta}
\end{align}
Now,  we move to  the output in (\ref{erroriqc-b}) which is given by: 
\begin{eqnarray}\label{output}
    z(t) &= C\zeta(t) + D\mu(t) + \tilde{D} \delta(t),
\end{eqnarray}
Similarly, substituting \(\mu(t)\) from (\ref{mu_expression}) into \(z(t)\) in  (\ref{output}), we have:
\begin{align}
z(t) &= C\zeta(t) + D\Big[\Lambda(s)\zeta(t) 
+ \frac{\partial \alpha}{\partial \hat{x}}(0,0)\hat{x}(t) 
+ \frac{\partial \alpha}{\partial d}(0,0)d(t) + \epsilon(t)\Big] + \tilde{D}\delta(t), \\
&= \Big(C + D\Lambda(s)\Big)\zeta(t) 
+ D\begin{bmatrix}
\frac{\partial \alpha}{\partial d}(0,0) & \frac{\partial \alpha}{\partial \hat{x}}(0,0)
\end{bmatrix}
\begin{bmatrix}
d(t) \\ \hat{x}(t)
\end{bmatrix}
+ D\epsilon(t) + \tilde{D}\delta(t), \\
&= \Big(C + D\Lambda(s)\Big)\zeta(t) 
+ \begin{bmatrix}
D\frac{\partial \alpha}{\partial d}(0,0) & D\frac{\partial \alpha}{\partial \hat{x}}(0,0)
\end{bmatrix}
\begin{bmatrix}
d(t) \\ \hat{x}(t)
\end{bmatrix}
+ \begin{bmatrix}
\tilde{D} & D
\end{bmatrix}
\begin{bmatrix}
\delta(t) \\ \epsilon(t)
\end{bmatrix},
\\
&= \Big(C + D\Lambda(s)\Big)\zeta(t) 
+ \begin{bmatrix}
D\frac{\partial \alpha}{\partial d}(0,0) & D\frac{\partial \alpha}{\partial \hat{x}}(0,0)
\end{bmatrix}
\eta(t)
+ \begin{bmatrix}
\tilde{D} & D
\end{bmatrix}
q(t).\label{output1}
\end{align}
Using  (\ref{zeta}) and (\ref{output1}), we can rewrite the system (\ref{erroriqc}) and the new system becomes:
\begin{equation}\label{new_system}
\Sigma_{\tilde{F}(\tilde{P},\Delta_{\delta})}^{\eta, q}:=  
\begin{cases} 
\dot{\zeta}(t) = \Big(A + B\Lambda(s)\Big)\zeta(t) + \begin{bmatrix}
B\frac{\partial \alpha}{\partial d}(0,0) & B\frac{\partial \alpha}{\partial \hat{x}}(0,0)
\end{bmatrix}\eta(t) + \begin{bmatrix}
\tilde{B} & B
\end{bmatrix}q(t), \\
z(t) = \Big(C + D\Lambda(s)\Big)\zeta(t) + \begin{bmatrix}
D\frac{\partial \alpha}{\partial d}(0,0) & D\frac{\partial \alpha}{\partial \hat{x}}(0,0)
\end{bmatrix}\eta(t) + \begin{bmatrix}
\tilde{D} & D
\end{bmatrix}q(t), \\
\zeta(0) = 0.
\end{cases}
\end{equation}
Now will try to eliminate the input of the uncertainties $\Delta_\delta(.)$ and $\Delta_\epsilon(.)$ from ‘virtual’ filters in (\ref{delta}) and (\ref{epsilon}),  respectively. From (\ref{delta}),  the ‘virtual’  filter equation is given by 
\begin{equation}
\begin{cases}
\dot{\psi}_{\delta}(t) = A_{{\delta}}\psi_{\delta}(t) + B_{{\delta}1} \delta(t) + B_{{\delta}2}y(t), \\
r_{\delta}(t) = C_{{\delta}}\psi_{\delta}(t) + D_{{\delta}1}\delta(t) + D_{{\delta}2}y(t), \\
\psi_{\delta}(0) = 0,
\end{cases}
\end{equation}
Knowing that \(y(t) = z(t) + \hat{y}(t)\), therfore, substituting \(y(t)\) into the system, we get:
\begin{align}
\dot{\psi}_{\delta}(t) &= A_{{\delta}}\psi_{\delta}(t) + B_{{\delta}1}\delta(t) + B_{{\delta}2}\big(z(t) + \hat{y}(t)\big), \\
r_{\delta}(t) &= C_{{\delta}}\psi_{\delta}(t) + D_{{\delta}1}\delta(t) + D_{{\delta}2}\big(z(t) + \hat{y}(t)\big).
\end{align}
From the earlier results in  (\ref{new_system}), \(z(t)\) is given by:
\begin{equation}
z(t) = \Big(C + D\Lambda(s)\Big)\zeta(t) + 
\begin{bmatrix}
D\frac{\partial \alpha}{\partial d}(0,0) & D\frac{\partial \alpha}{\partial \hat{x}}(0,0)
\end{bmatrix}
\eta(t) +
\begin{bmatrix}
\tilde{D} & D
\end{bmatrix}
q(t),
\end{equation}
and from the reference system \(\Sigma_{P}^{\pi^*}\) in (\ref{refcon}), \(\hat{y}(t)\) is:
\begin{equation}
\hat{y}(t) = C_r\hat{x}(t) + D_r d(t).
\end{equation}
Substitute \(z(t)\) and \(\hat{y}(t)\) into \(\dot{\psi}_{\delta}(t)\):
\begin{align}
\dot{\psi}_{\delta}(t) &= A_{{\delta}}\psi_{\delta}(t) + B_{{\delta}1}\delta(t) 
+ B_{{\delta}2}\Big(C + D\Lambda(s)\Big)\zeta(t) \nonumber \\
&\quad + 
\begin{bmatrix}
D\frac{\partial \alpha}{\partial d}(0,0) + B_{{\delta}2}D_r & D\frac{\partial \alpha}{\partial \hat{x}}(0,0) + B_{{\delta}2}C_r
\end{bmatrix}\eta(t) \nonumber \\
&\quad + 
\begin{bmatrix}
B_{{\delta}2}\tilde{D} & B_{{\delta}2} D
\end{bmatrix}q(t).
\end{align}
Similarly, substitute \(z(t)\) and \(\hat{y}(t)\) into \(r_{\delta}(t)\):
\begin{align}
r_{\delta}(t) &= C_{{\delta}}\psi_{\delta}(t) + D_{{\delta}1}\delta (t) 
+ D_{{\delta}2}\Big(C + D\Lambda(s)\Big)\zeta(t) \nonumber \\
&\quad + 
\begin{bmatrix}
D\frac{\partial \alpha}{\partial d}(0,0) + D_{{\delta}2}D_r & D\frac{\partial \alpha}{\partial \hat{x}}(0,0) + D_{{\delta}2}C_r
\end{bmatrix}\eta(t) \nonumber \\
&\quad + 
\begin{bmatrix}
D_{{\delta}2}\tilde{D} & D_{{\delta}2}D
\end{bmatrix}q(t).
\end{align}
Thus, the virtual filter for the uncertainty $\Delta_\delta$  is:
\begin{equation}\label{delta2}
\begin{cases}
\dot{\psi}_{\delta}(t) = A_{{\delta}}\psi_{\delta}(t) 
+ B_{{\delta}2}\Big(C + D\Lambda(s)\Big)\zeta(t) \\
\quad + 
\begin{bmatrix}
B_{{\delta}2}D\frac{\partial \alpha}{\partial d}(0,0) + B_{{\delta}2}D_r & B_{{\delta}2}D\frac{\partial \alpha}{\partial \hat{x}}(0,0) + B_{{\delta}2}C_r
\end{bmatrix}\eta(t) \\
\quad + 
\begin{bmatrix}
B_{{\delta}2}\tilde{D} + B_{{\delta}1}& B_{{\delta}2}D
\end{bmatrix}q(t), \\
r_{\delta}(t) = C_{{\delta}}\psi_\delta(t)
+ D_{{\delta}2}\Big(C + D\Lambda(s)\Big)\zeta(t) \\
\quad + 
\begin{bmatrix}
D_{{\delta}2}D\frac{\partial \alpha}{\partial d}(0,0) + D_{{\delta}2}D_r & D_{{\delta}2}D\frac{\partial \alpha}{\partial \hat{x}}(0,0) + D_{{\delta}2}C_r
\end{bmatrix}\eta(t) \\
\quad + 
\begin{bmatrix}
D_{{\delta}2}\tilde{D}+D_{{\delta}1} & D_{{\delta}2}D
\end{bmatrix}q(t), \\
\psi_{\delta}(0) = 0.
\end{cases}
\end{equation}
For the virtual filter for the uncertainty $\Delta_\epsilon $ in (\ref{epsilon}) is given  by: 
\begin{equation}\label{epsilon2}
\begin{cases}
\dot{\psi}_\epsilon(t) = A_{\epsilon}\psi_\epsilon(t) + B_{\epsilon1}\epsilon(t) + B_{\epsilon2}\eta(t), \\
r_\epsilon(t) = C_{\epsilon}\psi_\epsilon(t) + D_{\epsilon1} \epsilon(t) + D_{\epsilon2}\eta(t), \\
\psi_\epsilon(0) = 0,
\end{cases}
\end{equation}
Using an extended state $\chi(t) = [\zeta(t)~ \psi_{\delta}~\psi_{\epsilon}]^T$ and the equations (\ref{new_system}), (\ref{delta2}) and (\ref{epsilon2}) we get the extended system in (\ref{extendedsystem}).
\end{proof}
\section*{Acknowledgment}
The authors express their gratitude to Spin-Works for their support and to Tiago Amaral for providing the Apollo software used in data generation. Additionally, we extend our appreciation to Joost Veenman from SENER Aeroespacial, Tres Cantos, Spain, for granting access to the IQClab Toolbox, which was instrumental in this research. Furthermore, we sincerely thank Samir Bennani and David Sanchez de la Llana from the European Space Research and Technology Centre (ESA), 2201 AZ Noordwijk, The Netherlands, for their valuable feedback and support.

\printendnotes

\bibliography{sample}

\begin{biography}[example-image-1x1]{A.~One}
Please check with the journal's author guidelines whether author biographies are required. They are usually only included for review-type articles, and typically require photos and brief biographies (up to 75 words) for each author.
\bigskip
\bigskip
\end{biography}

\graphicalabstract{example-image-1x1}{Please check the journal's author guildines for whether a graphical abstract, key points, new findings, or other items are required for display in the Table of Contents.}

\end{document}